\tikzstyle{vertex}=[draw, circle, fill, inner sep = 2.4pt]
\renewcommand{\cal}{\mathcal}
\DeclareMathOperator{\sgn}{sgn}
\renewcommand{\t}{\text}
\DeclareMathOperator{\Pf}{Pf}
\DeclareMathOperator{\poly}{poly}
\newcommand{\sub}{\subseteq}
\newcommand{\set}[1]{\left\{#1\right\}}
\newcommand{\grp}[1]{\left(#1\right)}
\renewcommand{\P}{\textsf{P}}
\newcommand{\NP}{\textsf{NP}}
\newcommand{\eps}{\varepsilon}
\newcommand{\FF}{\mathbb{F}}
\newcommand{\Color}{{\normalfont\textsc{Coloring}}}
\newcommand{\DualColor}{{\normalfont\textsc{Dual Coloring}}}
\newcommand{\CliqueCover}{{\normalfont\textsc{Clique Cover}}}
\newcommand{\DualCliqueCover}{{\normalfont\textsc{Dual Clique Cover}}}
\newcommand{\SetCover}{{\normalfont\textsc{Set Cover}}}
\newcommand{\Clique}{\normalfont\textsc{Clique Cover}}
\definecolor{amethyst}{rgb}{0.6, 0.4, 0.8}
\renewcommand{\bar}{\overline}
\newcommand{\mA}{\mathbf{A}}
\newcommand{\mB}{\mathbf{B}}
\newcommand{\stepfont}[1]{\textsf{\textbf{\textcolor{lipicsGray}{#1}}}}
\newcommand{\pair}[1]{\left\langle #1\right\rangle}
\title{Graph Coloring Below Guarantees via Co-Triangle Packing}
\author{Shyan Akmal}{INSAIT, Sofia University ``St. Kliment Ohridski'', Bulgaria \and \url{https://www.shyanakmal.com}}{shyan.akmal@gmail.com}{https://orcid.org/0000-0002-7266-2041}{Partially funded by the Ministry of Education and Science of Bulgaria (support
for INSAIT, part of the Bulgarian National Roadmap for Research Infrastructure).}
\author{Tomohiro Koana}{Keio University, Japan}{tomohiro.koana@gmail.com}{https://orcid.org/0000-0002-8684-0611}{Supported by JST ERATO Grant Number JPMJER2301, Japan.}
\authorrunning{S. Akmal and T. Koana}
\keywords{coloring, parameterized algorithms, algebraic algorithms, above-guarantee, below-guarantee, subset convolution, determinants}
\begin{document}

\maketitle

\begin{abstract}

In the $\ell$-\Color{} problem, we are given a graph on $n$ nodes,
and tasked with determining if its vertices can be properly colored using $\ell$ colors.
In this paper we study 
\emph{below-guarantee graph coloring}, which tests whether an $n$-vertex graph can be properly colored using $g-k$ colors, where $g$ is a trivial upper bound such as~$n$.  
We introduce an algorithmic framework that builds on a packing of co-triangles $\overline{K_3}$ (independent sets of three vertices):  
the algorithm greedily finds co-triangles and employs a win-win analysis.  
If many are found, we immediately return \emph{yes};  
otherwise these co-triangles form a small \emph{co-triangle modulator}, whose deletion makes the graph co-triangle-free.

Extending the work of [Gutin et al., SIDMA 2021], who solved $\ell$-\Color{} (for any $\ell$) in randomized $O^{\ast}(2^{k})$ time when given a $\overline{K_2}$-free modulator of size~$k$, we show that this problem can likewise be solved in randomized $O^*(2^{k})$ time when given a $\overline{K_3}$-free modulator of size~$k$.

This result in turn yields a randomized $O^{*}(2^{3k/2})$ algorithm for $(n-k)$-\textsc{Coloring} (also known as \textsc{Dual Coloring}), improving the previous $O^{*}(4^{k})$ bound.
We then introduce a smaller parameterization, $(\omega+\overline{\mu}-k)$-\textsc{Coloring}, where $\omega$ is the clique number and $\overline{\mu}$ is the size of a maximum matching in the complement graph; since $\omega+\overline{\mu}\le n$ for any graph, this problem is strictly harder.  
Using the same co-triangle-packing argument, we obtain a randomized $O^{*}(2^{6k})$ algorithm, establishing its  fixed-parameter tractability for a smaller parameter.  
Complementing this finding, we show that no fixed-parameter tractable algorithm exists for $(\omega-k)$-\textsc{Coloring} or $(\overline{\mu}-k)$-\textsc{Coloring} under standard complexity assumptions.

\end{abstract}

\newpage 

\section{Introduction}
\label{sec:intro}

Graph coloring is a cornerstone of theoretical computer science and discrete mathematics: it connects to deep structure theorems and has driven the development of numerous algorithmic techniques \cite{DBLP:journals/ijm/AppelH77,BHK2009,DBLP:journals/ejc/BollobasCE80,DBLP:journals/anm/ChudnovskyRST06,DBLP:books/wi/JensenT11}.
Given an undirected graph $G$ on $n$ vertices,
a \emph{proper coloring} of $G$ is an assignment of colors to its nodes such that any two adjacent vertices receive different colors. 
The canonical decision problem for this concept is $k$-\Color{}: given an integer $k$ and a graph $G$, decide whether $G$ admits a proper coloring with at most $k$ colors.

While 2-\Color{} is polynomial-time solvable, 3-\Color{} (and therefore $k$-\Color{} for every $k\ge3$) is already  \NP{}-hard \cite{Lovasz1973}.
The field of parameterized complexity tackles such \NP{}-hard tasks by measuring the complexity of solving these problems in terms of secondary measures.
In this paper we focus on a particularly important subarea of this field 
known as \emph{below-guarantee parameterization}.

\subparagraph*{Below-Guarantee Parameterization.}
In the guarantee parameterization paradigm,
we reformulate problems by first identifying classes of instances that are trivially solvable---typically because a solution is guaranteed to exist---and then parameterize general instances by their distance from these ``easy'' cases.
See  \cite{GutinMnichSurvey} for a survey of this area.

\subparagraph{Dual Coloring.}
In the context of coloring graphs with $n$ nodes,
observe that the $n$-\Color{} problem is trivial, 
because 
we can obtain a proper coloring by assigning each vertex a distinct color.
The framework
of below-guarantee parameterization described above then motivates studying the $(n-k)$-\Color{} problem, also referred to as \DualColor{}, 
where the parameter $k$ captures how many colors we are trying to ``save'' compared to the trivial coloring using different colors for all vertices. 
The \DualColor{} problem 
has been influential in parameterized complexity, with its study in \cite{ChorFJ04}
introducing the concept of crown reductions, by now a basic technique in the field of kernelization \cite[Chapter~4]{KernelizationTextbook}.  
Given an  $n$-node instance $G$ of 
\DualColor{} and any constant $\eps > 0$, \cite{LDYR2021} showed that in polynomial time one can kernelize the instance to a graph $G'$ on $(2+\eps)k$ vertices
such that the answer to the \DualColor{} problem is the same on $G$ and $G'$, for sufficiently large $k$ in terms of $\eps > 0$.
Combining this kernelization with 
existing $2^n\poly(n)$ time algorithms for graph coloring \cite{BHK2009},
we immediately recover a $O^*(4^{(1+\eps)k})$ time algorithm for \DualColor{} for any constant $\eps > 0$,
where we write $O^*(f(k))$ to denote $f(k)\poly(n)$. 
Without going through this kernelization, one can also use the algorithms of \cite{GMOW2021} to solve \DualColor{} directly in $O^*(4^k)$ time.
We improve upon this result:

\begin{restatable}{theorem}{dualcolor}
\label{thm:dualcolor}
    \DualColor{} can be solved in randomized $O^*(2^{3k/2})$ time.
\end{restatable}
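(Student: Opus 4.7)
The plan is to reduce \DualColor{} directly to the paper's extension of the Gutin et al.\ algorithm, using a greedy $\overline{K_3}$-packing to produce a small modulator whose size is controlled by a win-win argument. The intuition is quantitative: each independent triple in a vertex-disjoint packing, when collapsed into a single color class, saves two colors relative to the trivial coloring that gives every vertex its own color. So a packing of size $\lceil k/2 \rceil$ already certifies an $(n-k)$-coloring, while a packing of size less than $k/2$ touches fewer than $3k/2$ vertices and leaves behind a $\overline{K_3}$-free graph.

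Concretely, I would first compute, in polynomial time, a maximal vertex-disjoint packing $P$ of co-triangles in $G$ by repeatedly scanning all triples and adding any independent triple to $P$, removing its vertices, and iterating until none remain. Let $t = |P|$ and let $D$ be the set of $3t$ vertices covered by $P$. If $t \ge \lceil k/2 \rceil$, I return \emph{yes}, witnessed by the proper coloring that assigns each triple in $P$ a distinct private color (one color per triple) and gives each of the remaining $n-3t$ vertices its own color; this uses $t + (n-3t) = n - 2t \le n - k$ colors. Otherwise $t < k/2$, so $|D| < 3k/2$, and by maximality of $P$ the induced subgraph $G - D$ contains no $\overline{K_3}$, i.e., $D$ is a $\overline{K_3}$-free modulator of size less than $3k/2$.

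At this point I would invoke the paper's main algorithmic theorem, which solves $\ell$-\Color{} for any $\ell$ in randomized $O^*(2^{|D|})$ time when given a $\overline{K_3}$-free modulator $D$. Setting $\ell = n - k$ and using $|D| < 3k/2$ yields the claimed $O^*(2^{3k/2})$ bound.

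The main obstacle is not in this reduction itself, which is essentially a clean parameter bookkeeping argument, but in guaranteeing that the win-win threshold is set so that the two branches meet exactly at $2^{3k/2}$; the tight choice is $\lceil k/2 \rceil$, which matches the ``save two colors per co-triangle'' accounting. The heavy lifting—extending the Gutin et al.\ framework from $\overline{K_2}$-free to $\overline{K_3}$-free modulators without losing in the exponent—is handled by the paper's separate main theorem and is used here as a black box.
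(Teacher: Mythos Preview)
Your proposal is correct and matches the paper's proof essentially line for line: greedily build a maximal $\overline{K_3}$-packing, return \emph{yes} when it has size at least $k/2$ (saving two colors per co-triangle), and otherwise feed the resulting $\overline{K_3}$-free modulator of size below $3k/2$ into \Cref{lemma:main}. The only cosmetic difference is that the paper first passes to the complement and phrases everything as \CliqueCover{} with triangle packings, whereas you stay in the original graph with co-triangles; the arguments are identical.
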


Let us mention a closely related problem called $(n-k)$-\SetCover{}. 
In this problem, we are given a family $\mathcal{F}$ of subsets of $[n]$, and are tasked with determining if there exist $(n - k)$ sets from $\mathcal{F}$ whose union is $[n]$. 
Very recently, Alferov et al.~\cite{AlferovBB24} showed that
$(n-k)$-\SetCover{} can be solved in $O^*(2^{3k/2}|\mathcal{F}|)$ time. 
Even though the $(n-k)$-\Color{} problem can be cast as an instance of $(n-k)$-\SetCover{}, where $\mathcal{F}$ is the collection of all independent vertex sets of the graph,
this result of \cite{AlferovBB24} does not imply an $O^*(2^{3k/2})$ time algorithm for $(n-k)$-\Color{},
because $\mathcal{F}$ can have exponentially many sets in $n$. 
The dynamic programming approach used in the algorithm of \cite{AlferovBB24} appears to inherently require this runtime dependence on $|\mathcal{F}|$ even when the sets $\mathcal{F}$ can be implicitly described as in $(n-k)$-\Color{},
so that \Cref{thm:dualcolor} is not directly implied by this previous work. 
However, we note it seems plausible that one could combine the ideas of \cite{AlferovBB24} with  the subset convolution arguments of~\cite{subset-convolution} to obtain fast algorithms for \DualColor{} as well, and recover \Cref{thm:dualcolor}.

\subparagraph*{A Stronger Structural Guarantee.}  
Given a graph $G$, we let $\overline{G}$ denote its complement graph, which has the same vertex set as $G$, but an edge $\set{u,v}$ if and only if $u$ and $v$ are distinct, non-adjacent nodes in $G$. 
Let $\omega = \omega(G)$ denote the size of the maximum clique in $G$.
Let $\bar{\mu} = \mu(\overline{G})$ denote the size of the maximum matching in the complement $\overline{G}$.
Then we have the following structural parameter-based guarantee for coloring $G$.

\begin{observation}
    \label{obs:structure-guarantee}
    The graph $G$ admits a proper coloring using at most $\grp{\omega + \bar{\mu}}$ colors. 
\end{observation}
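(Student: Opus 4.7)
The plan is to build a proper coloring of $G$ by combining a matching-based coloring of the complement with a clique-based coloring of the leftover vertices. First I would fix a maximum matching $M$ of $\overline{G}$, so that $|M| = \bar{\mu}$. Every edge $\{u,v\} \in M$ corresponds, by definition of $\overline{G}$, to a pair of vertices that are non-adjacent in $G$; therefore it is legal to assign $u$ and $v$ the same color in $G$. Doing this independently for each of the $\bar{\mu}$ matching edges covers $2\bar{\mu}$ vertices and uses exactly $\bar{\mu}$ colors.

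Next I would handle the set $U$ of vertices left unmatched by $M$. The key observation is that no edge of $\overline{G}$ can lie entirely inside $U$: any such edge could be added to $M$, contradicting its maximality. Hence $U$ is an independent set in $\overline{G}$, which means $U$ is a clique in $G$, and so $|U| \le \omega$. I would then assign each vertex in $U$ its own fresh color, using at most $\omega$ additional colors.

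Putting the two stages together produces a proper coloring of $G$ with at most $\bar{\mu} + \omega$ colors total, establishing the observation. The argument is essentially just a Gallai-type decomposition, and the only delicate point is the appeal to maximality of $M$ to force $U$ to be a clique in $G$; there is no real obstacle beyond stating this carefully.
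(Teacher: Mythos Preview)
Your proof is correct and follows essentially the same approach as the paper's: take a matching in $\overline{G}$, color each matched pair with one color, observe that the unmatched vertices form a clique in $G$ of size at most $\omega$, and give them distinct colors. The only cosmetic difference is that the paper uses a \emph{maximal} matching and bounds $|M|\le\bar{\mu}$, whereas you use a \emph{maximum} matching so that $|M|=\bar{\mu}$; both work for the same reason.
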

\begin{proof}
    Let $M$ be a maximal matching in $\overline{G}$.
By definition, $|M|\le \bar{\mu}$.
By maximality of $M$ in~$\overline{G}$,
the vertices that are not incident to any edge in $M$ form an independent set in $\bar{G}$.
Thus the set $C$ of vertices outside $M$  form a clique in $G$, so $|C|\le \omega$.
We color $G$ as follows.
For each edge in $M$, we assign a unique color to both of its endpoints, and we assign distinct new colors to all vertices in $C$.
In total, we use at most $(\omega+\bar{\mu})$ colors.
By construction this coloring is proper, which proves the claim. 
\end{proof}

Note that in any graphs with $n$ vertices, we always have  
    $\grp{\omega + \bar{\mu}} \le n.$
This is because if we take a clique $C$ in $G$ of size $\omega$ and a matching $M$ in $\bar{G}$ of size $\bar{\mu}$, then by definition each edge of $M$ is incident to at least one distinct vertex outside of $C$, which forces $n\ge |C| + |M| = \grp{\omega + \bar{\mu}}$.
Thus the guarantee of \Cref{obs:structure-guarantee} is stronger than the trivial fact that any graph with $n$ vertices always has a proper coloring using $n$ colors. 

In the same way this latter observation motivated the \DualColor{} problem,
it is natural to ask if we can obtain efficient algorithms while parameterizing below the stronger guarantee provided by \Cref{obs:structure-guarantee}.
We prove that this is indeed possible.

\begin{restatable}[Parameterization from Stronger Guarantee]{theorem}{guaranteefpt}
\label{thm:guarantee}
Given an integer $k\ge 1$, and a graph $G$ that has maximum clique size $\omega$ and whose complement has maximum matching size~$\bar{\mu}$, 
we can solve $(\omega + \bar{\mu} - k)$-\Color{} in randomized $O^*(2^{6k})$ time.
\end{restatable}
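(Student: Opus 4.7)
My plan is to apply the greedy co-triangle packing plus win-win framework from \Cref{thm:dualcolor}, but calibrated so that a ``large'' packing yields the tighter structural bound $\chi(G) \le \omega + \bar\mu - k$. Specifically, I would greedily build a maximal packing of vertex-disjoint co-triangles $T_1, T_2, \ldots$ in $G$. If the packing has size at least $2k$, I would return \textsc{yes} via a structural argument (sketched below). Otherwise, its vertex-union $D$ has size less than $6k$ and is a $\overline{K_3}$-free modulator: $G - D$ contains no independent set of size three. I would then invoke the paper's extension of Gutin et al., which decides $\ell$-\Color{} for any $\ell$ (in particular $\ell = \omega + \bar\mu - k$) in randomised $O^*(2^{|D|}) = O^*(2^{6k})$ time given such a modulator.

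The substantive work would be the structural claim: if $G$ contains $2k$ pairwise vertex-disjoint co-triangles, then $\chi(G) \le \omega + \bar\mu - k$. I would start from the colouring witnessing \Cref{obs:structure-guarantee}: fix a maximum matching $M^*$ in $\overline G$ of size $\bar\mu$ and let $C := V \setminus V(M^*)$, which is a clique in $G$ of size at most $\omega$; the baseline colouring assigns one colour per $M^*$-edge and one colour per $C$-vertex, totalling $|M^*| + |C| \le \omega + \bar\mu$. Each co-triangle $T_i = \{u_i,v_i,w_i\}$ meets $C$ in at most one vertex, since $T_i$ is independent in $G$ while $C$ is a clique. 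In the ``nice'' case where $w_i \in C$ and $\{u_i, v_i\}$ is a matching edge of $M^*$, I would recolour $\{u_i, v_i, w_i\}$ with a single colour and save one colour. In the other cases, I would exploit the maximality of $M^*$ to force extra clique structure on the $M^*$-partners of the matched $T_i$-vertices: for instance, if $w_i \in C$ but $u_i, v_i$ are matched to distinct outside partners $u_i', v_i'$, then the matching exchange $\{u_i,u_i'\}, \{v_i,v_i'\} \leftrightarrow \{u_i,v_i\}, \{u_i',v_i'\}$ must fail (forcing $\{u_i', v_i'\}$ to be an edge of $G$, not of $\overline G$), and the length-three augmenting path $x - u_i' - u_i - w_i$ must fail for every $x \in C$ (forcing $u_i'$ to be adjacent in $G$ to every vertex of $C$, and likewise $v_i'$). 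Hence $C \cup \{u_i', v_i'\}$ is a clique of $G$ of size at most $\omega$, yielding two units of ``clique slack''. Symmetric augmentation arguments handle the triangles $T_i$ lying entirely in $V(M^*)$, possibly after a preliminary matching swap.

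The main obstacle is making these per-triangle savings additive across the $2k$ co-triangles. The key ingredient is that for any two non-nice co-triangles $T_i, T_j$, the length-five alternating path $w_i - u_i - u_i' - u_j' - u_j - w_j$ in $\overline G$, were it to exist, would augment $M^*$ and contradict its maximality; hence $\{u_i', u_j'\} \notin E(\overline G)$, i.e.\ $u_i'$ is adjacent to $u_j'$ in $G$. Symmetric arguments for every pair of partner vertices show that the $M^*$-partners across all non-nice triangles, together with $C$, form a single clique in $G$ of size at most $\omega$. Each non-nice triangle thus contributes (on the order of two) colours of clique slack that accumulate independently of the merge savings from the nice triangles. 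Summing contributions over all $2k$ triangles would give at least $k$ saved colours, establishing $\chi(G) \le \omega + \bar\mu - k$. The most delicate part is verifying that the exchange arguments remain consistent when several triangles are handled simultaneously---in particular that matching swaps performed for one triangle do not destroy the structure needed for another---but once this bookkeeping is in place, the theorem follows from the win-win framework.
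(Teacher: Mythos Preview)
Your win-win framework and the $2k$ threshold match the paper exactly; the divergence is entirely in how you certify that $2k$ disjoint co-triangles force $\chi(G) \le \omega + \bar\mu - k$, and here the plan has a real gap. You anchor on one maximum matching $M^*$ of $\overline G$ with exposed clique $C$ and harvest savings triangle-by-triangle, but the only case you actually argue is $|T_i \cap C| = 1$; for $|T_i \cap C| = 0$ you offer only ``symmetric arguments, possibly after a swap''. That case can account for all $2k$ triangles (take any $\overline G$ with a perfect matching, so $C = \emptyset$), no swap among maximum matchings can then put a triangle vertex into $C$, and your accounting gives $p = q = 0$ and zero direct savings; whether the baseline $|C| + \bar\mu$ already meets $\omega + \bar\mu - k$ then hinges on $\omega - |C|$, which your $|T_i \cap C| = 0$ discussion does not bound. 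Even the case you do treat is incomplete: the length-$3$ augmenting path $x - u_i' - u_i - w_i$ only shows $u_i'$ is $G$-adjacent to $C \setminus \{w_i\}$, not to $w_i$ itself, so $C \cup \{u_i', v_i'\}$ is not yet a clique; and partner vertices can coincide with other co-triangle vertices, so the ``$2p$ new clique vertices'' count is unjustified.

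The paper avoids all of this by never fixing $M^*$ or $C$. Working in the complement (so the parameters become $\alpha$ and $\mu$), it lets $S$ be the $6k$ vertices of $2k$ packed triangles and exhibits an explicit clique cover: those $2k$ triangles, a maximum matching $M_{\text{out}}$ of the induced graph on $\bar S$, and singletons on the leftover independent set $I$. The bound $|M_{\text{out}}| + |I| \le \alpha + \mu - 3k$ follows from three one-line inequalities (with $M_{\text{in}}$ a maximum matching on $S \cup I$): $|M_{\text{in}}| + |M_{\text{out}}| \le \mu$, $|I| \le \alpha$, and $|I| + 6k - 2|M_{\text{in}}| \le \alpha$, averaged with weights $1, \tfrac12, \tfrac12$. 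No per-triangle casework, no augmenting-path analysis, no partner bookkeeping.
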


\subparagraph*{Our approach.}

The key ideas behind \Cref{thm:dualcolor,thm:guarantee} are sketched below; see \Cref{subsec:technical-overview} for more details.
We first greedily compute a maximal packing of co-triangles $\overline{K_3}$ (independent triplets).
If the packing holds many co-triangles, saving two colors per co-triangle already yields the desired proper coloring.
Otherwise, the packing's vertices form a small \emph{co-triangle modulator} $S$, whose deletion results in a $\overline{K_3}$-free graph.
In the latter case, we extend the randomized procedure of Gutin et al.~\cite{GMOW2021}, to show that we can solve the desired coloring problem in $O^*(2^{|S|})$ time.
 
\subparagraph*{Hardness results.}
We also note that algorithms which are fixed-parameter tractable with respect to $k$ are unlikely to exist for $(\omega-k)$-\Color{} and $(\bar{\mu}-k)$-\Color{}, so that combining $\omega$ and $\bar{\mu}$ together in \Cref{thm:guarantee} appears to be necessary in order to achieve efficient parameterized algorithms with respect to these  parameters.
For the former, this is because 3-\Color{} is \NP{}-hard in planar graphs \cite{GJS1976}. 
Since a planar graph does not have a clique of size five, this means that assuming $\P\neq\NP$ we cannot hope to solve $(\omega-k)$-\Color{} in $f(k)\poly(n)$ time for any function $f(\cdot)$.
To show hardness for the latter problem, we prove the following.

\begin{restatable}{theorem}{guaranteehard}
\label{thm:hardness}
    The
    $(n/2 - k)$-\Color{} problem is {\normalfont\textsf{W[1]}}-hard in the parameter $k$ on graphs with $n$ vertices whose complement has a perfect matching.
\end{restatable}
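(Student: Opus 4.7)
I would prove W[1]-hardness via a parameterized reduction from \textsc{Multicolored Clique}. Given an instance $(H, V_1, \dots, V_k)$ with $|V_i| = N$, I aim to construct in polynomial time a graph $G$ on $n = \poly(N, k)$ vertices whose complement has a perfect matching, together with an integer $k' = \Theta(k^2)$, such that $G$ admits a proper coloring with $n/2 - k'$ colors if and only if $H$ contains a multicolored clique. Since $k'$ depends only on $k$, this is a parameterized reduction, which transfers the W[1]-hardness of \textsc{Multicolored Clique} to the target problem.

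The construction operates on the complement side: a proper coloring of $G$ with $n/2 - k'$ colors is a clique partition of $\bar G$ of size $n/2 - k'$, which---after subtracting the trivial perfect-matching cover of size $n/2$---amounts to exhibiting a clique partition satisfying $\sum_i (|C_i| - 2) \ge 2k'$. I would build $\bar G$ by decorating a matching backbone with two kinds of gadgets. For each color $i \in [k]$, a \emph{vertex-selection} gadget $\Gamma_i$ contains one matched ``slot pair'' $(p_{i,v}, q_{i,v})$ per $v \in V_i$ together with one shared matched ``selector pair'' $(s_i, s_i')$, with edges of $\bar G$ arranged so that $\{s_i, s_i', p_{i,v}, q_{i,v}\}$ is a $K_4$ for every $v \in V_i$. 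Because the selector pair can lie in exactly one clique of any partition, at most one such $K_4$ can be activated per color class, and doing so contributes one unit of savings. For each color pair $i < j$, an \emph{edge-verification} gadget $\Gamma_{ij}$ attaches a matched ``connector pair'' together with further adjacencies indexed by $E(H)$, designed so that one additional unit of savings can be extracted at $\Gamma_{ij}$ only when the selections made at $\Gamma_i$ and $\Gamma_j$ coincide with the endpoints of some edge of $H$ between $V_i$ and $V_j$. Setting $k' = k + \binom{k}{2}$, the forward direction is then a routine assembly: a multicolored clique $\{u_1, \dots, u_k\}$ in $H$ activates the $k$ vertex-selection $K_4$s and the $\binom{k}{2}$ edge-verification cliques, yielding a partition of size exactly $n/2 - k'$.

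The main obstacle is the reverse direction, which reduces to a \emph{gadget rigidity} argument. I would show that every clique partition of $\bar G$ of size at most $n/2 - k'$ must extract exactly one unit of savings from each of the $k + \binom{k}{2}$ gadgets, and moreover that the saving at any $\Gamma_{ij}$ forces vertex-selections at $\Gamma_i$ and $\Gamma_j$ that are consistent with a common edge of $H$. The delicate step is eliminating ``mixed'' cliques that straddle multiple gadgets and might jointly beat the per-gadget cap on savings; I expect to handle this by padding each gadget with sufficiently many dummy matched pairs so that any cross-gadget clique is dominated by the pure within-gadget alternatives. Once rigidity is in place, the activated gadgets in any near-optimal partition of $\bar G$ directly encode a multicolored clique in $H$, completing the reduction.
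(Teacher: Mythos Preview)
Your high-level template (reduce from \textsc{Multicolored Clique}, pass to the complement and reason about clique partitions relative to a perfect-matching baseline) matches the paper, but the actual constructions diverge, and your plan leaves the crucial step unspecified.

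The paper does \emph{not} build separate vertex-selection and edge-verification gadgets. Instead it takes the input $k$-partite graph $G$ on parts $V_1,\dots,V_k$ and embeds it directly as an induced subgraph of $\tilde G$, then adds a $(k+2)$-clique on new vertices $u_1,\dots,u_{k+2}$ and, for each $i\in[k]$, threads an odd-length path $P_i=\langle u_i,v_{i1},w_{i1},v_{i2},\dots,w_{i(n-1)},v_{in}\rangle$ through $V_i$ using fresh degree-$2$ subdivision vertices $w_{ij}$. The resulting $\tilde G$ has $N=2kn+2$ vertices, a perfect matching, and target $\ell=k(n-1)+2=N/2-(k+1)$, so the new parameter is $k+1$ rather than your $\Theta(k^2)$. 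The reverse direction is then a short counting argument: the $k(n-1)$ subdivision vertices $w_{ij}$ are pairwise nonadjacent and each has two nonadjacent neighbors, so they occupy $k(n-1)$ cliques covering at most $2k(n-1)$ vertices; only two cliques remain for the other $\ge 2k+2$ vertices, one of which is forced (via $u_{k+2}$) to be the $u$-clique, leaving a single clique of size $\ge k$ inside $G$. No rigidity analysis, no padding, no cross-gadget issues.

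The gap in your plan is precisely the part you flag as ``delicate'': you never say how $\Gamma_{ij}$ detects \emph{which} $v\in V_i$ was activated at $\Gamma_i$. Your selector pair $(s_i,s_i')$ sits in a $K_4$ with one slot pair $(p_{i,v},q_{i,v})$, but that consumes the slot pair, so $\Gamma_{ij}$ cannot reuse it; and if $\Gamma_{ij}$ instead touches unused slot pairs of $\Gamma_i$, it learns nothing about the selection. Some coupling mechanism (shared vertices, equality gadgets, auxiliary signal vertices) is required, and once you add it the ``no mixed cliques'' concern becomes real rather than something dummy padding obviously fixes. This can likely be made to work, but it is the heart of the reduction and your plan does not supply it. By contrast, the paper sidesteps edge verification entirely: the edges of $H$ are already present in $\tilde G$, and the path gadgets merely ensure that the final leftover clique is forced to live inside $G$.
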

\hspace{-1.5em}In view of \Cref{thm:hardness}, we cannot hope for an $O^*(f(k))$-time algorithm for $(\bar{\mu}-k)$-\Color{},
for any function $f(\cdot)$.

\subsection{Technical Overview}
\label{subsec:technical-overview}

As a starting point, we give a brief overview for how the 
work of \cite{GMOW2021} implies a
$O^*(4^k)$ time algorithm for \DualColor{}.

Given a host graph $G$ and pattern graph $H$, we say $G$ is \emph{$H$-free} if $G$ does not contain $H$ as an induced subgraph. 
We say a set of vertices $S$ in $G$ is an \emph{$H$-free modulator} if deleting the vertices in $S$ from $G$ makes the graph $H$-free. 
For $\ell \in \mathbb{N}$,
we let $K_\ell$ denote the complete graph on $\ell$ vertices,
and $\bar{K_\ell}$ denote its complement, the independent set on $\ell$ vertices.

The main relevant result from previous work is the following, 
proven in \cite[Theorem 3.5]{GMOW2021}.

\begin{proposition}[Clique Modulator]
    \label{prop:modulator}
    Given a graph $G$ together with a $\bar{K_2}$-free modulator of size $p$ (i.e., a set of at most $p$ vertices, whose deletion from $G$ turns the graph into a clique),
    we can solve $k$-\Color{} on $G$ for any $k$ in randomized $O^*(2^p)$ time.
\end{proposition}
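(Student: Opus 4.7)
The plan is to reduce the problem to a sequence of subset convolutions indexed by the modulator $S$. Write $C := V(G) \setminus S$, which is a clique of size $q := |C|$. If $k < q$ then no proper $k$-coloring exists, so we may assume $q \le k$. In any proper $k$-coloring of $G$, the $q$ vertices of $C$ occupy $q$ distinct color classes (the \emph{marked} classes); each marked class extends into $S$ by a subset that is disjoint from the corresponding $C$-vertex's neighborhood and independent in $G[S]$, while the remaining $k - q$ \emph{free} classes are (possibly empty) independent subsets of $G[S]$. Hence a proper $k$-coloring exists if and only if there are pairwise disjoint sets $X_1,\dots,X_q \sub S$ with $\set{c_i} \cup X_i$ independent in $G$, together with a partition of $S \setminus \bigcup_i X_i$ into $k - q$ independent sets of $G[S]$.

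To decide this, define for $T \sub S$ the Boolean indicators $\beta_i(T) = [\set{c_i} \cup T \text{ is independent in } G]$ and $\gamma(T) = [T \text{ is independent in } G[S]]$, both precomputable for all $T \sub S$ in $O^{\ast}(2^p)$ time. Letting $\ast$ denote the disjoint-union subset convolution $(f \ast g)(T) = \sum_{A \sqcup B = T} f(A) g(B)$, set
\[
N = \bigl(\beta_1 \ast \beta_2 \ast \cdots \ast \beta_q \ast \gamma^{\ast(k-q)}\bigr)(S).
\]
By construction, $N$ counts ordered decompositions of the form described above, so $N > 0$ exactly when $G$ is $k$-colorable. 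Computing $N$ amounts to $k - 1$ subset convolutions, each carried out in $O^{\ast}(2^p)$ ring operations by the Björklund--Husfeldt--Kaski--Koivisto subset convolution algorithm; since $k \le n$, this yields $O^{\ast}(2^p)$ ring operations overall.

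The main obstacle is controlling the size of the intermediate counts (which can be as large as $k^n$) so that every ring operation runs in $\poly(n)$ time. I would address this by performing all arithmetic modulo a uniformly random prime $P$ of $\Theta(\log n)$ bits: the integer $N$ has $O(n \log k)$ distinct prime factors, so if $N \ne 0$ then $N \not\equiv 0 \pmod{P}$ with at least constant probability, which is amplified to high probability by independent repetition. This modular-arithmetic step is the sole source of randomness in the claimed bound; aside from it, the only delicate point is verifying that $N$ really does enumerate the ordered decompositions corresponding to proper $k$-colorings, which requires carefully allowing empty free color classes and acknowledging that distinct orderings of the marked and free classes can yield the same underlying coloring (an overcount that is harmless since we only test positivity).
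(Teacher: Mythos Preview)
Your argument is correct. The paper does not prove this proposition itself but cites \cite{GMOW2021}, whose method (as summarized in the paper's technical overview) is genuinely different from yours: there the problem is encoded as a bipartite-matching question between the clique vertices and subsets of $S$, captured by a determinant with polynomial entries over the squarefree ring, and the answer is extracted via polynomial identity testing (Schwartz--Zippel) combined with subset convolution to implement the ring arithmetic. Your approach sidesteps the matrix and sieving machinery entirely, reducing directly to an integer count via iterated subset convolution; this is conceptually simpler, and the randomness you introduce is of a different nature (fingerprinting a single integer rather than identity-testing a multivariate polynomial). What the determinant formulation buys is a framework that generalizes: the paper's \Cref{lemma:main} for $\overline{K_3}$-free modulators is proved by replacing the determinant with a Pfaffian, whereas it is less obvious how to extend your counting argument to that setting.

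One remark: your algorithm can in fact be made deterministic, which is slightly stronger than the proposition as stated. The count $N$ is at most $k^{|S|}\le k^{p}$ (not $k^{n}$), since you are distributing the at most $p$ elements of $S$ among $k$ ordered blocks; moreover every intermediate value arising in the ranked zeta/M\"obius computations is bounded in absolute value by $\poly(p)\cdot 2^{O(p)}k^{p}$, hence has $O(p\log n)$ bits. Arithmetic on such integers costs $\poly(n)$, so the whole procedure already runs in deterministic $O^{\ast}(2^{p})$ time and the random prime is unnecessary.
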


Recall that given a pattern graph $H$ and a host graph $G$, we say an \emph{$H$-packing} in $G$  is a collection $\cal{H}$ of vertex-disjoint, induced copies of $H$ in $G$. 
Using \Cref{prop:modulator} we can solve \DualColor{} in $O^*(4^k)$ time using the following three steps.

\begin{description}
    \item[Step 1: Identify Matching]\hfill
    
    Extract a maximal $\overline{K_2}$-packing $M$ in $G$.
    \item[Step 2: Large Matching $\Rightarrow$ Easy Instance]\hfill
    
    If $|M|\ge k$, then report that a proper coloring using at most $(n-k)$ colors exists. 
    \item[Step 3: Small Matching $\Rightarrow$ Small Modulator]\hfill
    
    If $|M| < k$, the vertices participating in $M$ form a $\bar{K}_2$-free modulator of size $p < 2k$. 
    Then apply \Cref{prop:modulator} to solve
    \DualColor{} in 
    $O^*(4^k)$ time.
\end{description}

In \stepfont{Step 1} we obtain $M$ by running a polynomial time algorithm for finding a maximum matching in the complement graph $\bar{G}$.
In \stepfont{Step 2}
we observe that if $|M|\ge k$, then we can assign a distinct color for each $\bar{K_2}$ in $M$ to color its two vertices, and then color all remaining vertices in $G$ with different colors to obtain a proper coloring using at most \(k + (n-2k) = n-k\) colors overall.
In \stepfont{Step 3} we observe that if $|M| < k$,
by maximality deleting each $\bar{K_2}$ in $M$ results in a $\bar{K_2}$-free graph,
so \Cref{prop:modulator} solves the problem. 

This strategy of extracting a matching and then performing casework on its size to either identify a solution or enforce more structure is similar to previous work on kernelization for \DualColor{} \cite{ChorFJ04,LDYR2021} and applications of crown decomposition \cite[Chapter 4]{KernelizationTextbook}. 

To solve \DualColor{} faster in $O^*(2^{3k/2})$ time, we work with a structure more complicated than a matching: a triangle packing. 
The core technical lemma powering our new algorithms is the following result.

\begin{restatable}{lemma}{main}
\label{lemma:main}
Given a $\overline{K_3}$-free modulator of size $p$ and any positive integer $k$, the $k$-\Color{} problem can be solved in randomized $O^*(2^{p})$ time.
\end{restatable}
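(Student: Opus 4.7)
Let $S$ be the given $\overline{K_3}$-free modulator of size $p$ and write $V' = V(G) \setminus S$, $n' = |V'|$. Since $G[V']$ is $\overline{K_3}$-free, its complement $H := \overline{G[V']}$ is triangle-free, so in any proper coloring of $G$ each color class contains at most two vertices of $V'$, and if it contains two, they span an edge of $H$. A proper $k$-coloring of $G$ is therefore equivalent to choosing a matching $M$ of $H$, a subset $T_e \subseteq S$ for every $e \in M$, a subset $T_v \subseteq S$ for every unmatched $v \in V' \setminus V(M)$, and a partition of the remaining $S$-vertices into $j$ independent sets of $G[S]$, subject to all chosen $S$-subsets being jointly disjoint, each $e \cup T_e$ and $\{v\} \cup T_v$ independent in $G$, and $(n' - |M|) + j \le k$.

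My plan is to extend the subset-convolution argument of \cite{GMOW2021} behind \Cref{prop:modulator}. Work in the ring $R := \FF[z_i : i \in S]/(z_i^2)$, in which $z^A \cdot z^B$ equals $z^{A \cup B}$ when $A \cap B = \emptyset$ and vanishes otherwise; each multiplication in $R$ costs $O^*(2^p)$ by fast subset convolution \cite{subset-convolution}. For every ``shape'' $X \in \{\emptyset\} \cup V' \cup E(H)$ I will precompute the slot polynomial
$$f_X(z) \;:=\; \sum_{T \subseteq S :\; X \cup T \text{ independent in } G} z^T \;\in\; R$$
in $O^*(2^p)$ total time. Then a proper $k$-coloring exists iff a shape-generating polynomial---the sum, over matchings $M$ of $H$ and numbers $j$ of pure-$S$ classes, of a product of slot polynomials marked by a class-count variable $t$---has a nonzero coefficient at $z^S$ in some $t^i$ with $i \le k$. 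Pure-$S$ classes are absorbed by convolving with $\sum_{j \ge 0} (f_\emptyset(z)\, t)^j$ within $R[t]/(t^{k+1})$.

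The key difficulty is summing over the exponentially many matchings of $H$ efficiently. I plan to use a Tutte-matrix / Pfaffian gadget on an augmented graph $\widetilde H$ over vertex set $V' \sqcup V''$, where $V''$ is a disjoint twin copy of $V'$: include the ``singleton edge'' $\{v, v'\}$ for each $v \in V'$, and for each $\{u, v\} \in E(H)$ include both $\{u, v\}$ and its twin $\{u', v'\}$. Perfect matchings of $\widetilde H$ correspond (many-to-one) to partitions of $V'$ into singletons and $H$-edges. Weighting the edges of $\widetilde H$ by the appropriate $f_X(z)$'s scaled by independent uniform random field scalars, the Pfaffian of the resulting skew-symmetric matrix with entries in $R[t]/(t^{k+1})$ evaluates to a random linear combination of all valid shape products. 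The main obstacle is ensuring that this Pfaffian faithfully detects validity despite the many-to-one correspondence and the Pfaffian sign structure; I expect a standard Schwartz--Zippel argument over the random edge weights---mirroring the classical Tutte-matrix argument for perfect matching---to show that the $z^S$-coefficient is nonzero with constant probability whenever a valid coloring exists. The total cost is $\poly(n)$ arithmetic operations in $R[t]/(t^{k+1})$, each $O^*(2^p)$, for overall runtime $O^*(2^p)$.
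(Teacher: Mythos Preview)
Your plan is essentially the paper's: reduce to clique cover, observe each class meets $V'$ in at most two vertices, encode the $V'$-side of a cover as a (not necessarily perfect) matching, and detect a valid cover by a Pfaffian computation with entries in the squarefree ring $R$ so that disjointness of the $S$-parts is enforced automatically. Your twin-copy gadget $\widetilde H$ and your class-counting variable $t$ are reasonable alternatives to the paper's choices (the paper instead enumerates the ``type'' $(t_0,t_1)$ of a cover and, for each type, builds an auxiliary graph on $\bar S\cup U$ with $|U|=t_1$ dummy vertices so that \emph{perfect} matchings of that graph correspond to the desired partitions of $\bar S$). Either gadget works for the matching side.

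There is, however, a real gap in your correctness argument. Your slot polynomials $f_X(z)=\sum_{T}z^T$ carry no tags distinguishing different choices of $T$, so for a fixed matching $M_1$ (and fixed $j$), the coefficient of $z^S t^i$ in $\prod_{e\in M_1} f_e\cdot\prod_v f_v\cdot (f_\emptyset)^j$ is the \emph{number} of ordered tuples $((T_e),(T_v),(T_1,\dots,T_j))$ of admissible subsets partitioning $S$. Over a finite field this integer can vanish modulo the characteristic even when such a partition exists, and then the whole $z^S$-coefficient (as a polynomial in your random edge scalars $\xi$) is identically zero; Schwartz--Zippel cannot rescue a polynomial that is already zero. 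The paper avoids exactly this by introducing fresh variables $z_{eC}$ (and $z_{iC}$) indexed by both the slot and the chosen clique $C\subseteq S$, so that each candidate cover contributes a \emph{distinct} monomial and no cancellation is possible; only then are these extra variables randomly evaluated and Schwartz--Zippel applied. You should do the same: replace $f_X(z)$ by $\sum_T \zeta_{X,T}\,z^T$ for independent random $\zeta_{X,T}\in\FF$ (computable in $O^*(2^p)$ per slot, $\poly(n)$ slots). With that fix, your argument goes through. A secondary point: be explicit that twin edges $\{u',v'\}$ carry only a random scalar (no $f$, no $t$), so that perfect matchings of $\widetilde H$ yield distinct monomials in the $\xi$'s and the $t$-degree correctly equals $n'-|M_1|$.
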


Since any $\bar{K_3}$-free modulator is also a $\bar{K_2}$-free modulator by definition, \Cref{lemma:main} is a stronger version of \Cref{prop:modulator}. 
We remark that an extension to $\overline{K_4}$-free modularator is unlikely because \Color{} is NP-hard on $\overline{K_4}$-free graphs \cite{DBLP:conf/wg/KralKTW01}.

\begin{description}
    \item[Step 1: Identify Triangle Packing]\hfill
    
        Extract a maximal $\bar{K_3}$-packing $\cal{T}$ in $G$.

    \item[Step 2: Large Packing $\Rightarrow$ Easy Instance]\hfill
    
        If $|\cal{T}|\ge k/2$, then report that a proper coloring using at most $(n-k)$ colors exists.

    \item[Step 3: Small Packing $\Rightarrow$ Small Modulator]\hfill

        If $|\cal{T}| < k/2$,
        the vertices participating in $\cal{T}$ form a $\bar{K_3}$-free modulator of size $p < 3k/2$.
        Then apply \Cref{lemma:main} to solve \DualColor{} in $O^*(2^{3k/2})$ time. 
\end{description}

In \stepfont{Step 1} we obtain $\cal{T}$ in polynomial time by running a greedy algorithm. 
In \stepfont{Step 2} we observe that if $|\cal{T}| \ge k/2$,
then we can assign each $\bar{K_3}$ in $\cal{T}$ a different color to color its three vertices,
and then color all remaining vertices in $G$ with different colors to obtain a proper coloring using at most 
    \(k/2 + (n - 3k/2) = n-k\)
colors overall.
In \stepfont{Step 3} we observe that if $|\cal{T}| < 3k/2$, by maximality deleting each $\bar{K_3}$ in $\cal{T}$ results in a $\bar{K_3}$-free graph, so \Cref{lemma:main} solves the problem.

The proof of \Cref{prop:modulator} from \cite{GMOW2021} is established using algebraic techniques.
The main idea is that given a $\bar{K_2}$-free modulator $S$ of size $p$, the induced subgraph on $V\setminus S$ is a clique,  so every vertex outside of $S$ must be assigned a different color. 
Solving the $k$-\Color{} problem can then be reduced to a problem similar to bipartite matching,
where one must determine for each vertex outside of $S$ the subset of vertices in $S$ it shares a color with (intuitively, the subset of $S$ it is ``matched to''). 
The authors use determinants to enumerate bipartite matchings in an auxiliary graph capturing this question, and apply algorithms for \emph{polynomial sieving} and \emph{fast subset convolution} to find the best matching (corresponding to an optimal coloring) and solve this problem in $O^*(2^p)$ time, comparable to the number of subsets of $S$ that must be considered. 

Our proof of \Cref{lemma:main} uses similar technical ingredients.
Here, because we are given a $\bar{K_3}$-free modulator $S$ instead of a $\bar{K_2}$-free modulator, we can no longer assume that the vertices in $\bar{S} = V\setminus S$ all get different colors.
However, we can assume that in a proper coloring of the graph, any given color appears at most twice across the nodes in $\bar{S}$.
This then lets us reduce the $k$-\Color{} problem in this context to a problem related to perfect matchings, albeit in a nonbipartite graph.
At a high level, we do this by considering all partitions of $\bar{S}$ into sets of size at most two, where these parts represent distinct color classes in $\bar{S}$, and then consider all ways of matching these parts to different subsets of $S$, corresponding to extending these color classes from $\bar{S}$ to the full graph. 

Since we work with matchings in general undirected graphs, we use Pfaffians instead of determinants in our arguments.
We also used a Pfaffian-based algebraic approach in recent work on \textsc{Edge Coloring} \cite{DBLP:conf/stacs/AkmalK25}.
On top of this framework, we use fast subset convolution to ensure that we achieve the same $O^*(2^p)$ runtime as before. 

Our algorithm for 
$(\omega + \bar{\mu}-k)$-\Color{} follows by employing a similar strategy to the procedure for \DualColor{} outlined above, with a more refined analysis with respect to the relevant structural parameters. 

\subparagraph*{Comparison with the $(n-k)$-\SetCover{} Result.}
Alferov et al.~\cite{AlferovBB24} achieved an $O^*(2^{3k/2}|\mathcal{F}|)$ algorithm for $(n-k)$-\SetCover{} on input families $\mathcal{F}$ via a different toolkit.  
Their procedure likewise begins by extracting a maximal triangle packing and immediately reports a yes-instance when that packing is large (\stepfont{Step 2} in our outline).  
When the packing is small, they study the matching structure in the remaining part via the Gallai-Edmonds decomposition.
A sufficiently large matching again certifies a solution, and they carefully use this fact to obtain a runtime bound of $O^*(2^{3k/2}|\mathcal{F}|)$.  
These arguments are tailored to $(n-k)$-\SetCover{}, and do not lead to an efficient algorithm parameterized by $\overline{K_3}$-free modulator size as in Lemma~\ref{lemma:main}, which is central to proving Theorem~\ref{thm:guarantee}.
For the special case of $(n-k)$-\Color{},
where the set $\mathcal{F}$ can be succinctly described as the collection of independent sets in a given graph, the dynamic program employed in \cite{AlferovBB24} still suffers from this issue.
Due to this limitation, this previous approach does not directly prove \Cref{thm:dualcolor} either.

\subparagraph*{Organization.}

In \Cref{sec:prelim}, we review notation and basic facts about graphs, matrices, and polynomials that will be useful to us. 
In \Cref{alg:modulator} we prove \Cref{lemma:main}.
In \Cref{sec:dualcoloring}, we apply
\Cref{lemma:main} to design our algorithm for \DualColor{} and $(\omega+\bar{\mu}-k)$-\Color{} (our harder structural parameter variant of \DualColor{}) and prove \Cref{thm:guarantee},
and also establish a lower bound for a related problem by proving \Cref{thm:hardness}.
We conclude in \Cref{sec:conclusion} by summarizing our work and mentioning relevant open problems.

\section{Preliminaries}
\label{sec:prelim}

\subparagraph*{General Notation.}

Given a positive integer $a$, we let $[a] = \set{1, \dots, a}$ denote the set of the first $a$ consecutive positive integers. 

\subparagraph*{Graph Notation.}
We let $G$ denote the input graph on $n$ vertices and $m$ edges. 
We let $V$ denote the vertex set of $G$. 
Given a subset of nodes $S\sub V$,
we let $G[S]$ denote the induced subgraph of $G$ on $S$. 
We let $\mu(G), \omega(G)$, and $\alpha(G)$ denote the size of a maximum matching, clique, and independent set in $G$ respectively. 
We let $K_3$ denote the complete graph on three nodes,
which we refer to as a \emph{triangle}.

\subparagraph*{From Coloring to Clique Covers.}

Given a graph $G$,
we say a \emph{clique cover} of $G$ is a collection $\cal{C}$
of vertex-disjoint cliques in $G$ such that every vertex appears in some clique of $\cal{C}$.
The \emph{size} of this cover is simply $|\cal{C}|$, the number of cliques used in the clique cover. 

In the $p$-\CliqueCover{} problem,
we are given a graph $G$ on $n$ vertices and are tasked with determining if $G$ has a clique cover of size at most $p$. 
For convenience, we refer to $(n-k)$-\CliqueCover{} as the \DualCliqueCover{} problem.
By complementing the graph it is easy to see that $p$-\Color{} and $p$-\Clique{} are equivalent computational tasks.

\begin{observation}[\CliqueCover{}]
    \label{obs:clique-cover}
    For any integer $p\ge 1$, 
    solving $p$-\Color{} on a graph $G$ is equivalent to solving
    $p$-\Clique{} on the complement graph $\bar{G}$.
\end{observation}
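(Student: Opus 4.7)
The plan is to exhibit an explicit bijection between proper $p$-colorings of $G$ and clique covers of size $p$ in the complement graph $\bar{G}$, making each direction of the stated equivalence immediate.

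First, I would unpack the definition of proper coloring. A proper coloring of $G$ using at most $p$ colors corresponds to a partition $V = V_1 \sqcup \cdots \sqcup V_q$ with $q \le p$, where each color class $V_i$ must be an independent set in $G$ (since any two adjacent vertices must receive distinct colors); conversely, any such partition yields a proper $q$-coloring by assigning a fresh color to each part. Hence $G$ is $p$-colorable if and only if $V$ admits a partition into at most $p$ independent sets.

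Next, I would invoke the central observation enabled by the definition of the complement: for any subset $U \sub V$, two distinct vertices of $U$ are nonadjacent in $G$ iff they are adjacent in $\bar{G}$, so $U$ is an independent set in $G$ iff $U$ is a clique in $\bar{G}$. Applying this class by class to the partition above shows that $\set{V_1, \dots, V_q}$ partitions $V$ into independent sets of $G$ iff it partitions $V$ into cliques of $\bar{G}$, which, by the paper's definition, is exactly a clique cover of $\bar{G}$ of size $q$. Composing the two equivalences yields the claim, and since $\bar{G}$ is computable from $G$ in polynomial time, this is a genuine reduction in both directions between the decision problems.

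No substantive obstacle arises: the entire argument is a definitional translation through complementation. The only small care needed is to verify both directions of the implication and to confirm that the ``size at most $p$'' bound lines up on both sides, which follows because one may freely split a color class or clique into singletons without violating either the independent-set property in $G$ or the clique property in $\bar{G}$, so any partition of size at most $p$ on one side produces a partition of size at most $p$ on the other.
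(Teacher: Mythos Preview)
Your proposal is correct and follows essentially the same approach as the paper: both argue that color classes in $G$ are independent sets, which are precisely cliques in $\bar{G}$, so a proper $p$-coloring of $G$ corresponds to a clique cover of $\bar{G}$ of the same size, with the reverse direction handled symmetrically. Your write-up is slightly more explicit about the partition formulation and the ``size at most $p$'' bookkeeping, but the underlying argument is identical.
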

\begin{proof}
    Suppose we have a proper coloring of $G$.
    Then the set of vertices assigned any fixed color form an independent set in $G$, 
    which means they form a clique in $\bar{G}$.
    Thus the coloring induces a clique cover of the same size.
    The reverse direction, that a solution to $p$-\Clique{} on $\bar{G}$ implies a solution to $p$-\Color{} on $G$, follows by symmetric reasoning.
\end{proof}

For convenience, in our proofs we will use \Cref{obs:clique-cover}  to frame our algorithms as solving problems related to \CliqueCover{}  instead of \Color{}. 

\subparagraph*{Algebraic Preliminaries.}

Throughout we work over a finite field $\FF$ of size $\poly(n)$.
Arithmetic operations over $\FF$ take $\poly(\log n)$ time. 

Given a set $S$, let $\Pi(S)$ denote the set of perfect matchings on $S$ (i.e., the collection of partitions of $S$ into sets of size exactly two). 
Given a skew-symmetric matrix $\mA$ (i.e., a matrix that equals $\mA = -\mA^\top$ the negative of its transpose) with rows and columns indexed by a set $S$, 
its Pfaffian is defined to be 
    \begin{equation}
    \label{eq:Pfaffdef}
    \Pf \mA = \sum_{M\in \Pi(S)} \sgn(M)\prod_{\set{u,v}\in M} \mA[u,v]
    \end{equation}
for a function $\sgn \colon\Pi(S)\to\set{-1,1}$ whose definition is not relevant here \cite[Section 7.3.2]{murota1999matrices}.

An arithmetic circuit is a way of constructing a polynomial by starting with its input variables, and iteratively building up more complicated expressions by using the standard  operations of addition, multiplication, and division.
The size of an arithmetic circuit is the total number of non-scalar operations used to construct its final output polynomial in this way. 
Our algorithms use the well known fact that Pfaffians admit polynomial-size arithmetic circuits that only use addition and multiplication operations \cite{Rote2001}. 

\begin{proposition}[Pfaffian Construction]
    \label{prop:Pfaffian}
    In $\poly(n)$ time we can construct a division-free arithmetic circuit of $\poly(n)$ size over $\FF$ whose inputs are indeterminate entries of an $n\times n$ skew-symmetric matrix $\mA$, and whose output is the polynomial $\Pf \mA$.
\end{proposition}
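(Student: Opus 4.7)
The plan is to adapt the Mahajan--Vinay (or equivalently Rote's) closed-walk approach for the determinant to the Pfaffian: exhibit a polynomial-size family of combinatorial ``Pfaffian walks'' on $[n]$ whose signed weighted sum, with weights drawn from entries of $\mA$, equals $\Pf\mA$, and then evaluate this signed sum by dynamic programming.

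First, I would define a notion of a partial Pfaffian structure: informally, a sequence of ``arcs'' on the vertex set $[n]$, each arc being a walk from a designated ``head'' vertex to some subsequent head vertex, together with conventions that fix the order of arcs (e.g., arcs are listed in increasing order of head). Each such structure contributes a monomial in the entries $\mA[i,j]$, namely the product of entries along its arcs, together with a sign determined by its combinatorial type. I would then prove, via a sign-reversing involution that swaps or reorients structures containing a repeated vertex or a ``non-matching shaped'' arc, that all non-matching contributions cancel, and that the surviving terms reproduce exactly the definition of $\Pf\mA$ given in~\eqref{eq:Pfaffdef}, including the signs $\sgn(M)$.

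Second, I would evaluate the signed sum by a dynamic program. A state records only a constant amount of information per vertex (the current head vertex, the current endpoint of the arc under construction, and the index of the pair currently being formed), yielding only $\poly(n)$ states. Transitions correspond either to extending the arc by one edge or closing it and opening a new one, and each transition multiplies or adds a single entry of $\mA$ with an appropriate fixed sign. Compiling this DP into a straight-line program yields a division-free arithmetic circuit over $\FF$ of size $\poly(n)$ that uses only $+$ and $\times$ gates, and the construction itself clearly runs in $\poly(n)$ time.

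The main obstacle is the combinatorial content of the first step: choosing a class of Pfaffian walks for which a clean sign-reversing involution exists and verifying that the remaining terms carry exactly the Pfaffian signs $\sgn(M)$, rather than a perfect matching sum off by an unwanted global sign or overcounting factor. Once this identity is established, routing it through the DP and converting the DP into a polynomial-size division-free arithmetic circuit over $\FF$ is entirely routine.
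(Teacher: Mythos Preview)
The paper does not actually prove this proposition: it is stated as a well-known fact with a citation to Rote's survey \cite{Rote2001} on division-free algorithms for the determinant and the Pfaffian. Your sketch is essentially an outline of the combinatorial approach developed there (and in the closely related work of Mahajan, Subramanya, and Vinay), so in spirit you are reconstructing the cited reference rather than offering an alternative route.

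That said, your plan is the right one, and the obstacle you flag is exactly where the work lies. A couple of concrete points to sharpen: the objects in the standard treatment are not arbitrary ``arcs'' but \emph{closed walks} (``clows'') with a distinguished minimum vertex, adapted so that each clow visits an even number of vertices and contributes a product of above-diagonal entries with signs tracking crossings; the involution acts on the first clow that either repeats its head or shares a head with a later clow. Getting the surviving sign to match $\sgn(M)$ requires fixing a canonical orientation of each matching (e.g., ordering pairs by their smaller element and orienting each pair from small to large), and checking that the parity of the resulting permutation agrees with the clow-sequence sign. Your DP state description is slightly too lean: you also need to carry the parity bit and the number of clows closed so far (or equivalently the running sign), but this is still $\poly(n)$ states and the compilation to a division-free circuit is routine, as you say.
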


We make use of the following classic result, proven for example in \cite[Theorem 7.2]{Motwani1995}.

\begin{proposition}[Identity Testing]
    \label{prop:PIT}
    Let $P$ be a nonzero polynomial over a finite field $\FF$ of degree at most $d$.
    If each variable of $P$ is assigned an independent, uniform random value from $\FF$,
    then the corresponding evaluation of $P$ is nonzero with probability at least $1-d/|\FF|$.
\end{proposition}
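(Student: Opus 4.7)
The plan is to prove this by induction on the number $n$ of variables appearing in $P$, following the standard Schwartz--Zippel / DeMillo--Lipton--Schwartz--Zippel template. The structure is clean: in the univariate case one appeals directly to the fact that a field polynomial of degree at most $d$ has at most $d$ roots, and in the multivariate case one reduces to the univariate case by exposing a single variable and conditioning on the values of the rest.

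For the base case $n = 1$, since $\FF$ is a field, a nonzero univariate polynomial of degree at most $d$ has at most $d$ roots, so a uniformly random assignment hits a root with probability at most $d/|\FF|$. For the inductive step, I would single out a variable $x_1$ and write
\[
P(x_1, \ldots, x_n) \;=\; \sum_{i=0}^{k} x_1^{\,i}\, Q_i(x_2, \ldots, x_n),
\]
where $k$ is the largest index for which $Q_i$ is not the zero polynomial. Then $Q_k$ is a nonzero polynomial in $n-1$ variables of degree at most $d - k$, so the inductive hypothesis applies to it. Let $a_1, \ldots, a_n$ be independent uniform samples from $\FF$, and consider the events $A = \{Q_k(a_2, \ldots, a_n) = 0\}$ and $B = \{Q_k(a_2, \ldots, a_n) \ne 0 \text{ and } P(a_1, \ldots, a_n) = 0\}$. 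By induction $\Pr[A] \le (d-k)/|\FF|$. Conditioned on the complement of $A$, the specialization $P(x_1, a_2, \ldots, a_n)$ is a univariate polynomial in $x_1$ whose leading coefficient $Q_k(a_2, \ldots, a_n)$ is nonzero, so it is a nonzero polynomial of degree exactly $k$ and by the base case the probability that $a_1$ is a root is at most $k/|\FF|$. Hence $\Pr[B] \le k/|\FF|$, and a union bound yields
\[
\Pr\bigl[P(a_1, \ldots, a_n) = 0\bigr] \;\le\; \frac{d-k}{|\FF|} + \frac{k}{|\FF|} \;=\; \frac{d}{|\FF|},
\]
which rearranges to the stated bound.

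There is essentially no obstacle here: the result is classical and the argument is routine. The only subtlety worth flagging in the write-up is the justification that, once we condition on $Q_k(a_2, \ldots, a_n) \ne 0$, the resulting univariate polynomial truly has degree $k$ rather than something smaller, so that the base case can be invoked with the correct parameter and the two probabilities add up to exactly $d/|\FF|$.
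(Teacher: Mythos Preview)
Your proof is correct and is the standard Schwartz--Zippel argument. The paper does not actually give its own proof of this proposition; it simply cites it as a classic result from \cite[Theorem~7.2]{Motwani1995}, and the inductive proof you wrote is exactly the one found in such references.
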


\subparagraph*{Subset Convolution.}

Let $S$ be a set of size $p$.
Let $\alpha$ and $\beta$ be functions from the collection of subsets of $S$ to $\FF$.

The \emph{subset convolution} $(\alpha\ast\beta)$ of $\alpha$ and $\beta$ is the function defined by setting
    \[(\alpha\ast\beta)(T) = \sum_{\substack{A, B\sub S\\ T = A\sqcup B}} \alpha(A)\beta(B)
\quad\text{for all $T\sub S$}. \]

For each $v\in S$, introduce a variable $y_v$. Let $Y = \set{y_v}_{v\in S}$ be this set of  variables. 
Consider polynomials
    \[P = \sum_{A\sub S} \alpha(A)\prod_{a\in A} y_a
\quad\text{ and }\quad
    Q = \sum_{B\sub S} \beta(B)\prod_{b\in B} y_b\]
over the quotient ring $R = \FF[Y]/\langle (y^2)_{y\in Y}\rangle$.
That is, $R$ is the usual ring of polynomials over the variables in $Y$  with coefficients from $\FF$, except that whenever we multiply the same variable with itself the resulting product vanishes. 

We refer to $R$ as the \emph{squarefree ring} over $Y$. 
By definition of polynomial multiplication and the fact that only squarefree monomials survive in $R$,
we get that 
    \[P\cdot Q = \sum_{T\sub S} (\alpha\ast\beta)(T)\prod_{t\in T} y_t\]
over $R$. 
This shows that computing subset convolutions is equivalent to computing products of polynomials over $R$. 
The following result then follows from known $O^*(2^p)$ time algorithms for subset convolution \cite{subset-convolution}.

\begin{proposition}[Fast Subset Convolution]
\label{prop:convolution}
    We can compute addition and multiplication over the squarefree ring $R$ on $p$ variables in $O^*(2^p)$ time.
\end{proposition}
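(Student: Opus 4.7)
The plan is to verify that both ring operations in $R$ reduce to standard subset-indexed computations on $\FF^{2^p}$. An element of $R$ is uniquely determined by its $2^p$ coefficients, one per squarefree monomial in the variables $Y$, so I would represent each element as a vector in $\FF^{2^p}$ indexed by subsets of $S$. Addition in $R$ is then coordinate-wise addition in $\FF^{2^p}$, trivially computable in $O(2^p)$ field operations and hence in $O^*(2^p)$ time.

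For multiplication, the paragraph immediately preceding the proposition already shows that the product $P\cdot Q$ in $R$ has, as the coefficient of $\prod_{t\in T} y_t$, exactly the value $(\alpha * \beta)(T)$. So the task reduces to computing the full subset convolution $\alpha * \beta$ in $O^*(2^p)$ time, which is precisely the theorem of Björklund, Husfeldt, Kaski, and Koivisto \cite{subset-convolution}. Invoking their algorithm completes the proof.

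If I had to re-derive their bound rather than black-box it, the outline is to stratify both inputs by cardinality, writing $\alpha = \sum_{k=0}^{p}\alpha_k$ where $\alpha_k(A) = \alpha(A)$ if $|A| = k$ and $0$ otherwise, and analogously for $\beta$. Then compute all union (covering) products $(\alpha_i \cup \beta_j)(T) = \sum_{A \cup B = T}\alpha_i(A)\beta_j(B)$ via zeta transform, pointwise multiplication, and Möbius inversion, each step running in $O^*(2^p)$ time by standard dynamic programming over the subset lattice. The key identity is that for $A \cup B = T$ one has $|A|+|B| = |T|$ if and only if $A$ and $B$ are disjoint; so restricting $(\alpha_i \cup \beta_j)$ to subsets $T$ with $|T| = i+j$ recovers exactly the $(i+j)$-cardinality slice of $\alpha * \beta$. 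Summing over the $O(p^2)$ pairs $(i,j)$ keeps the total runtime at $O^*(2^p)$.

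I do not foresee any serious obstacle: the entire procedure uses only additions, subtractions, and multiplications in $\FF$, with no division, so it is valid over the polynomial-size finite field fixed at the start of the section. The only thing to be mildly careful about is ensuring the cardinality-stratified bookkeeping is set up cleanly so that the $\poly(p)$ blowup from iterating over the $O(p^2)$ rank pairs gets absorbed into the $O^*$ notation, which is immediate.
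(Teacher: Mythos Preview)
Your proposal is correct and matches the paper's treatment: the paper does not give a standalone proof but simply observes (in the paragraph preceding the proposition) that multiplication in $R$ is subset convolution and then cites \cite{subset-convolution} for the $O^*(2^p)$ bound. Your write-up is actually more detailed than the paper's, since you spell out the addition case explicitly and sketch the ranked zeta/M\"obius derivation of the Bj\"orklund--Husfeldt--Kaski--Koivisto algorithm, neither of which the paper includes.
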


\section{Triangle Modulator}
\label{alg:modulator}

In this section, we prove the following key lemma.

\main*
\begin{proof}
    Replace the input graph with its complement.
    Then by \Cref{obs:clique-cover}, it suffices to show that 
    given a $K_3$-free modulator $S$ of size $p$ for $G$ and a positive integer $k$,
    we can solve $k$-\Clique{} on $G$ in $O^*(2^p)$ time. 

    We write $\bar{S} = V\setminus S$.
    By definition, $G[\bar{S}]$ contains no triangle (i.e., a copy of $K_3$). 
    This means that for every clique $C$ in $G$,
    we have $|C\cap \bar{S}|\in\set{0,1,2}$.
    Given a clique cover $\cal{C}$ of $G$,
    we say it has type $\vec{t} = (t_0, t_1)$ if for each $i\in\set{0,1}$ the cover $\cal{C}$ contains exactly $t_i$ cliques $C$ satisfying $|C\cap\bar{S}| = i$.
    Note that if a cover has type $\vec{t}$,
    then 
    because every vertex appears in a unique clique in the cover,
    exactly $t_2 = (|\bar{S}| - t_1)/2$ cliques in the cover intersect $\bar{S}$ at two nodes.
    Consequently, the size of a clique cover with type $\vec{t}$ is exactly
        \[t_0 + t_1 + t_2 = t_0 + (t_1 + |\bar{S}|)/2 = t_0 + (t_1 + n-p)/2.\]
    We say a type $\vec{t} = (t_0, t_1)$ is \emph{valid} if and only if we have 
        \(t_0 + (t_1 + n-p)/2 \le k\)
    so that a clique cover of type $\vec{t}$ has size at most $k$. 

    Our task is to determine if $G$ contains a clique cover of size at most $k$.
    Such a cover has at most $O(k^2) \le \poly(n)$ valid types $\vec{t}$, and we try out each possible such type and look for a clique cover with exactly that type. 

    To that end, fix a valid type $\vec{t} = (t_0, t_1)$.
    We now construct an auxiliary graph $H$,
    such that perfect matchings in $H$  encode how clique covers of $G$ with type $\vec{t}$ may restrict to clique covers of $G[\bar{S}]$.
    We include the vertex set $\bar{S}$ in $H$, and add all edges from $G[\bar{S}]$ to $H$.
    We then introduce a set $U$ of $t_1$ new vertices, and add edges from every node in $U$ to every vertex in $\bar{S}$.
    Intuitively, when we take a perfect matching $M$ of $H$, it will correspond to a clique cover $\cal{C}$ of $G$ such that 
    \begin{itemize}
        \item for each edge $\{ u, v \}$ with $u\in U$, there is a clique $C\in\cal{C}$ with $C\cap\bar{S} = \set{v}$, and
        \item for each edge $\{ v, w \}\in M$ with $v,w\in \bar{S}$, there is a clique $C\in \cal{C}$ with $C\cap\bar{S} = \set{v,w}$.
    \end{itemize}

    With this correspondence in mind, we 
    move to constructing a polynomial that will enumerate clique covers of $G$.

    Let $E(H)$ denote the edge set of $H$.
    For each $e\in E(H)$, introduce an indeterminate variable $x_e$.
    Now consider the matrix $\mA$ whose rows and columns are indexed by nodes of $H$,
    with entries defined by 
        \[\mA[v,w] = \begin{cases}
            x_{vw}&\text{if }\set{v,w}\in E(H)\text{ and } v\prec w \\
            -x_{vw}&\text{if }\set{v,w}\in E(H)\text{ and } w\prec v \\
            0&\text{otherwise}
        \end{cases}\]
    where $(\prec)$ is some fixed ordering on the nodes of $H$. 

    By \Cref{eq:Pfaffdef} and the definition of $\mA$ above, we have 
        \begin{equation}
        \label{eq:Pf-A}
        \Pf\mA = \sum_{M\in\Pi(H)} \sgn(M)\prod_{e\in M} x_e
        \end{equation}
    where here $\Pi(H)$ denotes the set of perfect matchings in $H$,
    and $\sgn(M)\in\set{-1,1}$ for each choice of $M$. 

    Next, we will introduce additional, larger polynomial expressions that we will substitute in for the $x_e$ variables,
    with the end goal of transforming the matching polynomial $\Pf \mA$ into a
    polynomial that enumerates collections of cliques in $G$.
    
    Go through all subsets $T\sub S$,
    and for each check if $T$ is a clique in $G$.
    This lets us determine the collection $\cal{F}$ of all cliques in $S$ in $O^*(2^p)$ time overall.

    Then for each edge $e\in E(H)$ and clique $C\sub S$, we introduce a variable $z_{eC}$. 
    For each $i\in [t_0]$ and clique $C\sub S$, also introduce a variable $z_{iC}$.
    Let $Z$ be the set of all these $z_{eC}$ and $z_{iC}$ variables.
    
    Additionally, for each each node $v\in S$, we introduce a variable $y_v$.
    Let $Y = \set{y_v}_{v\in S}$ denote this set of $k$ variables.

    Now for each $e\in E(H)$ and subset $T\sub S$,
    define the polynomial
        \begin{equation}
        \label{eq:Phi-poly}
        \Phi_{e}[T] = \sum_{\substack{C\in \cal{F} \\ C\sub T}} z_{eC}\prod_{v\in C} y_v.
        \end{equation}
    Intuitively this polynomial enumerates those cliques $C\sub T$ that can be extended 
    using vertices in $e\cap\bar{S}$ to obtain larger cliques in $G$.
    
        Also for each $i\in [t_0]$, define the polynomial
        \begin{equation}
        \label{eq:Phi-i}
        \Phi_i = \sum_{C\in\cal{F}} z_{iC}\prod_{v\in C} y_v.
        \end{equation}
    Intuitively these polynomials enumerate cliques $C\sub S$ that we do not plan on extending with nodes in $\bar{S}$.

    For each vertex $v\in V$,
    let $N(v)$ denote the set of vertices in $S$ adjacent to $v$ in $G$.
    We construct a new matrix $\mB$,
    by starting with the matrix $\mA$ and 
    \begin{itemize}
        \item for each $e = \set{u,v} \in E(H)$ with $u\in U$ and $v\in \bar{S}$, substituting $x_{uv}$ with 
            $\Phi_{e}[N(v)]$, and
        \item 
            for each $e = \set{v,w}\in E(H)$ with $v,w\in \bar{S}$, substituting $x_{vw}$ with $\Phi_e[N(v)\cap N(w)]$.
    \end{itemize}

    Now define the polynomial
        \begin{equation}
        \label{eq:Fdef}
        F = (\Pf \mB)\cdot \prod_{i=1}^{t_0} \Phi_i
        \end{equation}
    over $R[Z]$, where   $R = \FF[Y]/\pair{(y^2)_{y\in Y}}$ is the squarefree ring defined in \Cref{sec:prelim}.

    \begin{claim}[Polynomial Characterization]
    \label{claim:poly-char}
        The polynomial $F$ has a monomial divisible by \(\prod_{v\in S} y_v\)
        if and only if $G$ has a clique cover of type $\vec{t} = (t_0, t_1)$.
    \end{claim}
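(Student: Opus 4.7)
The plan is to set up an explicit correspondence between clique covers of $G$ of type $\vec{t}$ and nonzero contributions to the coefficient of $\prod_{v\in S}y_v$ in $F$. Because we work in the squarefree ring $R$, a monomial of $F$ is divisible by $\prod_{v\in S}y_v$ if and only if its $Y$-part equals this product exactly, since any repeated $y_v$ factor vanishes.

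For the forward direction, I start with a clique cover $\mathcal{C}$ of type $\vec{t}$ and exhibit a surviving contribution. Partition $\mathcal{C}$ by $|C\cap\bar{S}|\in\{0,1,2\}$, and fix arbitrary bijections between the $t_0$ cliques entirely in $S$ and $[t_0]$, and between the $t_1$ cliques meeting $\bar{S}$ at a single vertex and $U$. A clique $C\subseteq S$ assigned to index $i$ selects the summand $z_{i,C}\prod_{v\in C}y_v$ of $\Phi_i$. A clique $C$ with $C\cap\bar{S}=\{v\}$ assigned to $u\in U$ contributes the edge $e=\{u,v\}$ to a matching $M$ of $H$ and the summand indexed by $C\cap S$ of $\Phi_e[N(v)]$, valid since $C\cap S\subseteq N(v)$. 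A clique with $C\cap\bar{S}=\{v,w\}$ contributes the edge $e=\{v,w\}\in E(G[\bar{S}])\subseteq E(H)$ and the summand of $\Phi_e[N(v)\cap N(w)]$ indexed by $C\cap S$. The chosen edges form a perfect matching of $H$ since all $t_1$ vertices of $U$ are matched and the remaining $2t_2$ vertices of $\bar{S}$ are paired. The $Y$-factor of the resulting term is $\prod_{v\in S}y_v$ by the vertex-disjointness of $\mathcal{C}$, so the term survives in $R$.

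For the backward direction, I let $\mu$ be any monomial of $F$ whose $Y$-part equals $\prod_{v\in S}y_v$. Each variable $z_{e,C}$ or $z_{i,C}$ appears linearly in its unique defining summand, so the $Z$-part of $\mu$ uniquely identifies a perfect matching $M$ of $H$ together with clique choices $C_e\in\mathcal{F}$ for each $e\in M$ and $C_i\in\mathcal{F}$ for each $i\in[t_0]$. The substitutions defining $\mB$ guarantee that $\{v\}\cup C_e$ (when $e=\{u,v\}$ with $u\in U$), $\{v,w\}\cup C_e$ (when $e\subseteq\bar{S}$), and each $C_i$ are cliques of $G$. These cliques are vertex-disjoint on $\bar{S}$ because $M$ is a perfect matching, and they partition $S$ because the squarefree $Y$-factor equals $\prod_{v\in S}y_v$. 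The resulting family is therefore a clique cover of $G$ of type exactly $\vec{t}$.

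The main obstacle is cancellation coming from the $\sgn(M)$ signs in $\Pf\mB$: a priori, distinct expansion terms could share a common $(Y,Z)$-monomial and cancel. I defeat this by observing that the $Z$-part of each term uniquely pins down $M$ and every clique choice, so no two terms share a monomial. Consequently, a single cover of type $\vec{t}$ contributes exactly $t_0!\cdot t_1!$ expansion terms (one for each pair of bijections chosen in the forward construction) with coefficient $\pm 1$ and pairwise distinct $Z$-monomials; these cannot cancel, and so the coefficient of $\prod_{v\in S}y_v$ in $F$, viewed as an element of $\FF[Z]$, is nonzero precisely when such a clique cover exists.
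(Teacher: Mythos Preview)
Your proof is correct and follows essentially the same approach as the paper: both directions rely on the fact that the $Z$-variables $z_{eC}$ and $z_{iC}$ are indexed finely enough to uniquely recover the matching $M$ and all chosen cliques from any single monomial, so no cancellation can occur. You present the two implications in the opposite order and make the non-cancellation argument (including the $t_0!\cdot t_1!$ count of terms arising from a single cover) slightly more explicit than the paper does, but the underlying mechanism is identical.
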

    \begin{claimproof}
        Suppose first that $F$ has a monomial divisible by $\prod_{v\in S} y_v$.
        Since we work over the squarefree ring $R$, this monomial is of the form
            \begin{equation}
            \label{eq:mon}
            f(Z)\prod_{v\in S} y_v
            \end{equation}
        for some polynomial $f\in \FF[Z]$.
        Moreover, from \Cref{eq:Fdef} such a monomial in $F$ must have been generated by taking the product of monomials from $\Pf\mB$ and each of the $\Phi_i$ such that the appearance of variables from $Y$ in these monomials partition $Y$. 
        
        For each $i\in [t_0]$, let $C_i$ be the set of vertices $v\in S$ such that the $y_v$ variable was used by the monomial from $\Phi_i$ selected to help generate the expression from \Cref{eq:mon} in the expansion of the product from \Cref{eq:Fdef}.
        Then by the partition property, the $C_i$ are pairwise disjoint.
        From the definition of $\Phi_i$ in \Cref{eq:Phi-i}, we get that each $C_i$ is a clique.

        Now, consider the monomial selected from $\Pf \mB$ to help generate the expression from \Cref{eq:mon} in the expansion of the product from \Cref{eq:Fdef}.
        By \Cref{eq:Pf-A} and the definition of $\mB$ in terms of $\mA$,
        this monomial corresponds to a matching $M \in \Pi(H)$.
        Split $M = M_1\sqcup M_2$ by letting $M_b$ retain the edges in $M$ 
        which have exactly $b$ endpoints in $\bar{S}$ for each $b\in [2]$.
        Since $H$ has vertex set $\bar{S}\sqcup U$ and $|U| = t_1$, 
        we get that $M_1$ has exactly $t_1$ edges, and each of its edges has one endpoint in $U$ and one endpoint in $\bar{S}$.
        Let $v_1, \dots, v_{t_1}$ be the nodes in $\bar{S}$ appearing as endpoints of edges in $M_1$.
        For each $j\in [t_1]$, let $e_j$ be the edge containing $v_j$ in $M_1$.

        Since in $\mB$ we take $\mA$ and replace each $x_{e_j}$ variable
        with $\Phi_{e_j}[N(v_j)]$, the 
        product over the edges of $M_1$ from \Cref{eq:Pf-A}
        contributes the variables $y_v$ precisely for those vertices $v$ appearing in the union of some choice of cliques $D_1, \dots, D_{t_1}\sub S$ with the property that the set $D_j\sqcup\set{v_j}$ is a clique in $G$ for each $j$. 
        Since we work over $R$, the cliques $C_i$ and $D_j$ are  mutually disjoint.
        Moreover, since $C_i\sub \bar{S}$ for all $i$,
        we actually have that the cliques $C_i$ and $D_j\sqcup\set{v_j}$ are collectively vertex-disjoint.

        Set $t_2 = (|\bar{S}| - t_1)/2$, and 
        let $\tilde{e}_1, \dots, \tilde{e}_{t_2}$ be the edges of $M_2$.
        For each $\ell\in [t_2]$,
        let $\tilde{N}_{\ell}$ be the set of common neighbors in $S$ of the endpoints of the edge $\tilde{e}_\ell$.
        Since in $\mB$ we take $\mA$ and replace each $x_{\tilde{e}_\ell}$ variable
        with $\Phi_{\tilde{e}_\ell}[\tilde{N}_\ell]$,
        the product over the edges of $M_2$ from \Cref{eq:Pf-A}
        contributes the variables $y_v$ precisely for those vertices $v$ appearing
        in the union of some choice of cliques $\tilde{D}_1, \dots, \tilde{D}_{t_2}\sub S$
        with the property that the set $\tilde{D}_\ell\sqcup e_\ell$ is a clique in $G$ for each $\ell$.
        Since we work over $R$, the $C_i$, $D_j$, and $\tilde{D}_\ell$ are mutually disjoint.
        Moreover, since $M=M_1\sqcup M_2$ is a matching in $S$, 
        we in fact know that the $C_i$, $D_j\sqcup\set{v_j}$, and $\tilde{D}_\ell\sqcup \tilde{e}_\ell$ are all vertex-disjoint cliques in $G$.

        Combining this vertex-disjoint condition with the facts that $M$ is a perfect matching on $\bar{S}$ and that the monomial in \Cref{eq:mon} generated by the products of the monomials of all the cliques we have listed so far is divisible by $\prod_{v\in S} y_v$,
        we get that the 
        \(C_i\text{, }D_j\sqcup\set{v_j}\text{, and }\tilde{D}_{\ell}\sqcup\tilde{e}_\ell\) cliques taken together form a clique cover of $G$ of type $\vec{t} = (t_0, t_1)$ as claimed.

        Conversely, suppose $G$ has a clique cover of type $\vec{t} = (t_0, t_1)$.
        Set $t_2 = (|\bar{S}| - t_1)/2$.
        
        Let $C_1, \dots, C_{t_0}$ be the cliques in this cover intersecting $\bar{S}$ at zero nodes, 
        $D_1, \dots, D_{t_1}$ be the cliques in this cover intersecting $\bar{S}$ at one node,
        and $\tilde{D}_1, \dots, \tilde{D}_{t_2}$ be the cliques in this cover intersecting $\bar{S}$ at two nodes. 
        For each $i\in [t_0]$, select the monomial from $\Phi_i$ in \Cref{eq:Phi-i} corresponding to the clique $C_i$.
        
        For each $j\in [t_1]$,
        let $v_j$ be the unique vertex from $\bar{S}$ in $D_j$. 
        For each $\ell\in [t_2]$, let $\tilde{e}_\ell$ be the subset of two vertices in $\bar{S}$ contained in $\tilde{D}_\ell$.
        Since $\tilde{D}_\ell$ is a clique, $\tilde{e}_\ell$ is an edge. 
        Let $\tilde{N}_\ell$ be the set of common neighbors in $\bar{S}$ of the endpoints of $\tilde{e}_\ell$.
        The $v_j$ and $\tilde{e}_\ell$ are vertex-disjoint and account for all vertices in $\bar{S}$, because we are assuming we are starting with a clique cover.
        So we can take a perfect matching $M$ of $H$ which includes all the $\tilde{e}_\ell$ edges, and pairs each $v_j$ with a different node in $U$ (here we use the fact that $U$ has exactly $t_1$ nodes). 
        Let $e_j$ be the edge containing $v_j$ in $M$ for each $j\in [t_1]$.
        
        Then we can select the monomial of $\Pf \mB$ corresponding to choosing the summand for $M$ in \Cref{eq:Pf-A}, choosing the summand for the cliques $D_j\setminus\set{v_j}$ in the sums for $\Phi_{e_j}[N(v_j)]$ given by \Cref{eq:Phi-poly}, and choosing the summands for the cliques $\tilde{D}_\ell\setminus\tilde{e}_\ell$ in the sums for $\Phi_{\tilde{e}_\ell}[\tilde{N}_\ell]$ given by \Cref{eq:Phi-poly} again. 

        Then from the clique cover assumption, the product of all the monomials we selected from the $\Phi_i$ and $\Pf \mB$ will be divisible by $\prod_{v\in S} y_v$.
        Moreover, the product of the variables from $Z$ appearing in this monomial uniquely recover the cliques $C_i$, $D_j$, and $\tilde{D}_\ell$, as well as the matching $M$, because the $z_{iC}$ variables record the ordering of the $C_i$ cliques, and the $z_{eC}$ variables annotate the edges $e$ of the matchings corresponding to the $D_j\cap\bar{S}$ and $\tilde{D}_\ell\cap\bar{S}$ sets.
        Thus the monomial generated in this way cannot be cancelled out by any other term in the expansion of $F$ in \Cref{eq:Fdef}, which proves the desired result. 
        \end{claimproof}

Having established \Cref{claim:poly-char},
it suffices to show that we can test that $F$ has a monomial divisible by  $\prod_{v\in S} y_v$ in $O^*(2^p)$ time.
To do this, we will take a random evaluation of the variables in $Z$, and then use fast subset convolution over the variables in $Y$.

For all $i\in [t_0]$ and cliques $C\sub S$, we pick independent, uniform random $\xi_{iC}\in\FF$.
Independently from those values, for all $e\in E(H)$ and $C\sub S$ we also pick independent, uniform random $\xi_{eC}\in\FF$.
For each $i\in [t_0]$, let 
    \begin{equation}
        \label{eq:Phi-i-random-eval}
        \varphi_i = \sum_{C\in\cal{F}} \xi_{iC}\prod_{v\in C} y_v.
    \end{equation}
    be the result of evaluating the $Z$ variables of $\Phi_i$ on the $\xi_{iC}$ values. 
    We can compute each $\varphi_i$ in $O^*(2^p)$ time because we have already precomputed the collection $\cal{F}$ of cliques in $S$.
    
    Similarly,
    for each $e\in E(H)$ and $T\sub S$, let 
        \begin{equation}
        \label{eq:Phi-poly-random-eval}
        \varphi_e[T] = \sum_{\substack{C\in \cal{F} \\ C\sub T}} \xi_{eC}\prod_{v\in C} y_v
    \end{equation}
    be the result of evaluating the $Z$ variables of $\Phi_{e}[T]$ on the $\xi_{eC}$ values.
  Since we precomputed all the cliques contained in $S$,
  for any fixed $e\in E(H)$ and $T\sub S$ we can use \Cref{eq:Phi-poly-random-eval}
  to compute $\varphi_e[T]$ in $O^*(2^p)$ time. 
In particular, we can compute $\varphi_e[T]$
        for all $e\in E(H)$ and $T\sub S$ of the form 
        $T=N(v)$ for some $v\in\bar{S}$ or $T=N(v)\cap N(w)$ for some $v,w\in \bar{S}$
        in $O^*(2^p)$ time,
        because there are $\poly(n)$ choices for $e$ and $T$ in this case. 

    Having computed all these values, we can substitute the $x_e$ for the relevant values $\varphi_e[T]$ values needed to turn $\mA$ into $\mB$ (with respect to our random evaluation of the variables in $Z$).
    Now use \Cref{prop:Pfaffian} to obtain a polynomial-size, division-free arithmetic circuit for the Pfaffian, and feed the entries of $\mB$ we have just computed into it.
    Each addition and multiplication operation for this arithmetic circuit is now over the squarefree ring $R$, which by \Cref{prop:convolution} takes $O^*(2^p)$ time.
    Since there are only $\poly(n)$ such operations,
    we compute this random evaluation of $\Pf \mB$ over $R$ in $O^*(2^p)$ time overall.
   We then compute the product of this with all the $\varphi_i$ again in $O^*(2^p)$ time by \Cref{prop:convolution},
   which by \Cref{eq:Fdef} gives us the value of the polynomial $F$ under our random evaluation to its $Z$ variables.

   Since $F$ has degree at most $\poly(n)$ in its $Z$ variables,
   applying \Cref{prop:PIT} to the coefficient of $\prod_{v\in S}y_v$ in $F$ (viewed as a polynomial in $Z$), we see that by picking the size of the field $\FF$ to be a sufficiently large polynomial in $n$, with high probability the random evaluation of $F$ we computed has the monomial $\prod_{v\in S}y_v$ if and only if the original polynomial $F$ has a monomial divisible by $\prod_{v\in S} y_v$. 
   So by \Cref{claim:poly-char} checking if this random evaluation of $F$ has the monomial $\prod_{v\in S}y_v$ solves the $k$-\Color{} problem in $O^*(2^p)$ time with high probability.
   This proves the desired result.
   \end{proof}

\section{Below Guarantee Parameterizations}
\label{sec:dualcoloring}

In this section, we present our algorithm for \DualColor{}. 

\dualcolor*
\begin{proof}
    We replace the graph $G$ with its complement.
    Then by \Cref{obs:clique-cover}, it suffices to solve the \DualCliqueCover{} problem.

    We first construct a maximal collection $\cal{T}$ of vertex-disjoint triangles in $G$. 
    Since we can find a triangle in $G$ if one exists in polynomial time, we can construct $\cal{T}$ by repeatedly finding a triangle $T$ in $G$, including $T$ in $\cal{T}$, then deleting the vertices of $T$ from $G$ and repeating this process, until we determine that no triangles are left.
    We find at most $n/3$ triangles in this way, so this takes polynomial time overall.

    Let $t = |\cal{T}|$.
    Suppose $t\ge k/2$.
    Then taking the triangles in $\cal{T}$ together with the single-node sets consisting of each vertex $v$ not used by a triangle in $\cal{T}$ yields a clique cover for $G$ of size  
        \(t + (n-3t) = n-2t \le n-k.\)
     Thus, we can solve \DualCliqueCover{} by returning the cover described above.

     Otherwise, we have $t < k/2$.
     In this case, maximality of $\cal{T}$ implies that the set 
        \(S = \bigsqcup_{T\in\cal{T}} T\)
    is a $K_3$-free modulator for $G$ of size $3t$.
    Then by \Cref{obs:clique-cover} and \Cref{lemma:main} we can solve \DualColor{} in this case in $O^*(2^{3t}) \le O^*(2^{3k/2})$ time, as desired. 
\end{proof}

Next, we establish (randomized) fixed-parameter tractability for a smaller parameter.

\guaranteefpt* 

\begin{proof}

    We replace the graph $G$ with its complement. 
    Let $\mu = \mu(G)$ and $\alpha = \alpha(G)$ be the size of a maximum matching and maximum independent set in the new graph respectively.
    Since complementing the graph turns cliques into independent sets,
by \Cref{obs:clique-cover}
    it suffices to show that we can solve $(\alpha+\mu-k)$-\CliqueCover{} in $O^*(2^{6k})$ time. 
    
    We first construct a maximal collection $\cal{T}$ of vertex-disjoint triangles in $G$.
    Since finding a triangle in $G$, if it exists, takes polynomial time, and $\cal{T}$ can have at most $n/3$ triangles, we can obtain $\cal{T}$ in polynomial time using a greedy algorithm that repeatedly finds triangles, includes them in $\cal{T}$, deletes the obtained triangles from $G$, and continues in this fashion until we are left with a triangle-free graph.  

    Let $t = |\cal{T}|$.
    Suppose that $t\le 2k$.
    In this case, define the set 
        \(
        S = \bigsqcup_{T\in\cal{T}} T
        \)
    of all vertices participating in the triangles of $\cal{T}$.
    Then by maximality of $\cal{T}$, the set $S$ is a $K_3$-free modulator of $G$ of size $3t$.
    In this case, by \Cref{obs:clique-cover} and \Cref{lemma:main} we can solve 
    $(\alpha+\mu-k)$-\CliqueCover{} in $O^*(2^{3t}) \le O^*(2^{6k})$ time as claimed.

    Otherwise, $t > 2k$.
In this case,
take $\tilde{\cal{T}}\sub \cal{T}$
consisting of exactly $2k$ triangles, and define
        \(
        S = \bigsqcup_{T\in\tilde{\cal{T}}} T
        \)
   to  be a set of $6k$ vertices making up $2k$ of the triangles  in $\cal{T}$.
We show that in this case, a clique cover of size at most $(\alpha+\mu-k)$ always exists in $G$.

Write $\bar{S} = V\setminus S$.
Let $M_{\t{out}}$ be a maximum matching in $G[\bar{S}]$.
We can find $M_{\t{out}}$ in polynomial time. 
Let $\mu_{\t{out}} = |M_{\t{out}}|$ be the size of this matching.
Let $I$ be the set of vertices in $\bar{S}$ that do not appear as an endpoint of an edge in $M_{\t{out}}$.
By maximality of $M_{\t{out}}$, the set $I$ forms an independent set in $G$. 
Write $\alpha_{\t{out}} = |I|$.

\begin{claim}
    \label{claim:ineq}
    We have $\mu_{\t{out}} + \alpha_{\t{out}} \le \alpha + \mu - 3k$.
\end{claim}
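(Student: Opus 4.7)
The plan is to deduce the claim by averaging two elementary lower bounds on $\alpha + \mu$; remarkably, the triangle structure of $\tilde{\cal T}$ plays no direct role in this particular claim---only the size $|S| = 6k$ enters.

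First I would record the identity
\[
    \alpha_{\t{out}} \;=\; |\bar S| - 2\mu_{\t{out}} \;=\; n - 6k - 2\mu_{\t{out}},
\]
which follows directly from $I$ being the set of vertices of $\bar S$ uncovered by $M_{\t{out}}$, combined with $|S| = 6k$. Via this identity, the inequality the claim asserts is equivalent to the more transparent form $\alpha + \mu \ge n - 3k - \mu_{\t{out}}$.

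Next I would combine two standard lower bounds on $\alpha + \mu$. The first is the classical inequality $\alpha + 2\mu \ge n$, whose one-line justification is that the vertices uncovered by any maximum matching of $G$ form an independent set (otherwise an edge between two of them could augment the matching). The second is $\alpha \ge \alpha_{\t{out}}$, which holds because $I$ is independent not merely in $G[\bar S]$ but in $G$ itself: any edge inside $I$ could be added to $M_{\t{out}}$, contradicting its maximality in $G[\bar S]$. Adding these two inequalities yields $2(\alpha + \mu) \ge n + \alpha_{\t{out}}$, and substituting the identity from the first step gives $\alpha + \mu \ge n - 3k - \mu_{\t{out}}$, exactly the desired inequality.

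I foresee no serious obstacle: the argument amounts to noticing that the two most obvious lower bounds on $\alpha + \mu$ sum to precisely twice the target value. The only subtle point worth flagging is the need to verify that $I$ is independent in all of $G$, not just in $G[\bar S]$, but this is immediate from the maximality of $M_{\t{out}}$.
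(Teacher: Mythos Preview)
Your argument is correct, and it is in fact a cleaner route than the one taken in the paper. The paper introduces an auxiliary maximum matching $M_{\t{in}}$ in $G[S\sqcup I]$, writes $\mu_{\t{in}}=|M_{\t{in}}|$, and derives three inequalities: $\mu_{\t{in}}+\mu_{\t{out}}\le\mu$, $\alpha_{\t{out}}\le\alpha$, and $\alpha_{\t{out}}+6k-2\mu_{\t{in}}\le\alpha$ (the last because the vertices of $S\sqcup I$ missed by $M_{\t{in}}$ form an independent set). Averaging these produces the claim, with $\mu_{\t{in}}$ cancelling at the end. Your proof bypasses $M_{\t{in}}$ entirely by invoking the global fact $\alpha+2\mu\ge n$ together with the counting identity $\alpha_{\t{out}}=n-6k-2\mu_{\t{out}}$; this is exactly what remains of the paper's argument once the cancellation is anticipated. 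The advantage of your version is brevity and the absence of any auxiliary object; the paper's version is slightly more constructive in that it exhibits a concrete matching and independent set in $G$ witnessing the bound, which could be useful if one wanted to output a certificate, but for the inequality itself your averaging of $\alpha+2\mu\ge n$ and $\alpha\ge\alpha_{\t{out}}$ is the more transparent proof. (One small remark: your flagged ``subtle point'' that $I$ is independent in all of $G$ is automatic, since $I\subseteq\bar S$ means independence in $G[\bar S]$ and in $G$ coincide for $I$.)
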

\begin{claimproof}
We will prove the claim by establishing several inequalities concerning structural parameters in $G$.

    First, construct a maximum matching $M_{\t{in}}$ in $G[S\sqcup I]$ in polynomial time.
    Let $\mu_{\t{in}} = |M_{\t{in}}|$ be the size of this matching. 
    Since $M_{\t{out}}$ is in $G[\bar{S}\setminus I]$ by definition,
    we see that $M_{\t{in}}\sqcup M_{\t{out}}$ is a matching in $G$ of size $(\mu_{\t{in}} + \mu_{\t{out}})$.
    
    Since $\mu$ is the size of a maximum matching in $G$,
    we get that 
        \begin{equation}
        \label{ineq:2}
        \mu_{\t{in}} + \mu_{\t{out}} \le \mu.
        \end{equation}

    Note that since $\alpha$ is the maximum size of independent sets in $G$, we have
    \begin{equation}
        \label{ineq:1}
        \alpha_{\t{out}} \le \alpha.
    \end{equation}

    Now, by maximality of $M_{\t{in}}$ in $G[S\sqcup I]$, we know that the set of vertices in $S\sqcup I$ not participating as an endpoint in $M_{\t{in}}$ forms an independent set in $G$.
    This set then has size $(\alpha_{\t{out}} + 6k - 2\mu_{\t{in}})$.
    Since $\alpha$ is the maximum size of independent sets in $G$, we have 
        \begin{equation}
        \label{ineq:3}
        \alpha_{\t{out}} + 6k - 2\mu_{\t{in}} < \alpha.
        \end{equation}

    Now, we can write 
        \begin{equation}
        \label{eq:average}
        \mu_{\t{out}} + \alpha_{\t{out}} = \mu_{\t{out}} + (1/2)\alpha_{\t{out}} + (1/2)\alpha_{\t{out}}.
        \end{equation}

    By applying \Cref{ineq:1,ineq:2,ineq:3} to the first through third terms on the right-hand side of \Cref{eq:average} respectively, we can bound
        \[\mu_{\t{out}} + \alpha_{\t{out}} \le (\mu - \mu_{\t{in}}) + (1/2)\alpha + (1/2)(\alpha + 2\mu_{\t{in}} - 6k) =  \alpha + \mu - 3k \]
    which proves the desired result. 
\end{claimproof}

By definition, every vertex in $\bar{S}$ either appears as an endpoint of $M_{\t{out}}$,
or belongs to $I$.
Then using the triangles in $\tilde{\cal{T}}$ to cover the vertices in $S$,
and taking the edges from $M_{\t{out}}$ and the individual nodes from $I$ to cover the vertices in $\bar{S}$,
we obtain a clique cover for $G$ of size 
    \[2k + \mu_{\t{out}} + \alpha_{\t{out}} \le \alpha + \mu - k\]
by \Cref{claim:ineq}.
So we can solve $(\alpha+\mu-k)$-\CliqueCover{} just by returning this cover. 
\end{proof}

Curiously, previous work showed that \textsc{Induced Matching} (the problem of finding a 1-regular induced subgraph on $k$ vertices) is also fixed-parameter tractable in $k$ for the same below-guarantee parameter $\alpha + \mu - k$ as in the proof above \cite{Koa23}.

\subparagraph*{Hardness.}

We now present our lower bound,
demonstrating that although \Cref{thm:guarantee}
shows we can get an efficient parameterized algorithm for graph coloring below $(\omega+\bar{\mu})$,
we do not expect an analogous algorithm to exist for coloring below the quantity $\bar{\mu}$ on its own.

\guaranteehard*
\begin{proof}
By replacing the graph with its complement and applying \Cref{obs:clique-cover},
it suffices to show that $(n/2 - k)$-\CliqueCover{} is \textsf{W[1]}-hard on graphs containing a perfect matching. 

We prove this result by reducing from $k$-\textsc{Colored Clique}.
In this problem, we are given a $k$-partite graph $G$ on $kn$ vertices,
with vertex set $V = V_1\sqcup \dots \sqcup V_k$ partitioned into parts $V_i$ consisting of $n$ nodes each,
and are tasked with determining if $G$ contains 
a clique on $k$ vertices.
This problem is known to be \textsf{W[1]}-hard \cite[Theorem 13.25]{ParameterizedTextbook}.

Take an instance $G$ of $k$-\textsc{Colored Clique}.

For each $i\in [k]$, we order the vertices in $V_i = \set{v_{i1}, \dots, v_{in}}$.

We construct a larger graph $\tilde{G}$
that contains $G$ as a subgraph,
in addition to some auxiliary nodes and edges we describe next. 
For each $i\in [k+2]$, we introduce a new node $u_i$ in $\tilde{G}$.
For each $i\in [k]$ and $j\in [n-1]$, we introduce a new node $w_{ij}$ in $\tilde{G}$.
We add edges between all the $u_i$ nodes. 
For all $i\in [k]$, we add an edge from $u_i$ to $v_{i1}$.
For all $i\in [k]$ and $j\in [n-1]$ we add edges from $w_{ij}$ to $v_{ij}$ and $v_{i(j+1)}$.

Conceptually, this construction of $\tilde{G}$ from $G$ involves adding a new clique of size $(k+2)$ to the graph on the $u_i$ vertices, and for each $i\in [k]$ introducing a path 
\begin{equation}
    P_i = \pair{u_i, v_{i1}, w_{i1}, v_{i2}, w_{i2}, \dots, w_{i(n-1)}, v_{in}}
\end{equation}
beginning at $u_i$, and alternating between  nodes in $V_i$ and $W_i$ according to the orders of the vertices in these sets. 
By design, for any $j\in [n]$, we can always split the path 
    \begin{equation}
        \label{eq:path-split}
        P_i = \pair{u_i}\diamond A_{ij} \diamond \pair{v_{ij}} \diamond B_{ij}
    \end{equation}
into its first node, a path $A_{ij}$ on an even number of vertices, the node $v_{ij}$, and a suffix $B_{ij}$ on an even number of vertices.
Moreover, $A_{ij}$ and $B_{ij}$ consist of $2n-2$ nodes altogether in this decomposition. 

Note that the constructed graph $\tilde{G}$ has a perfect matching, by taking the edges $\set{u_i, v_{i1}}$ for all $i\in [k]$ and $\set{w_{ij}, v_{i(j+1)}}$ for all $i\in [k]$ and $j\in [n-1]$.

From its definition, $\tilde{G}$ has $N = 2kn + 2$ nodes.
Set the parameter $\ell = k(n-1) + 2$.

\begin{claim}
\label{claim:reduction}
$G$ has a clique on $k$ vertices if and only if $\tilde{G}$ has a clique cover of size $\ell$.
\end{claim}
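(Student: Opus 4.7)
The plan is to exploit the highly restricted neighborhood structure of the $w_{ij}$ vertices: each $w_{ij}$ has exactly two neighbors in $\tilde{G}$, namely $v_{ij}$ and $v_{i(j+1)}$, and these two neighbors lie in the same part $V_i$ of $G$ and hence are non-adjacent in $G$ (and in $\tilde{G}$). Therefore every clique of $\tilde{G}$ that contains some $w_{ij}$ has size at most two, which will funnel both directions of the equivalence.

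For the forward direction, given a $k$-clique $Q=\{v_{1j_1},\dots,v_{kj_k}\}$ in $G$, I would assemble a clique cover of $\tilde{G}$ from three pieces. First, use the $(k+2)$-clique $\{u_1,\dots,u_{k+2}\}$ as a single cover clique. Second, use $Q$ itself as a single cover clique (valid by assumption). Third, for each $i\in[k]$, apply the split~\eqref{eq:path-split} at $j=j_i$: the sub-paths $A_{ij_i}$ and $B_{ij_i}$ have $2n-2$ vertices in total and each has even length, so their path edges yield a perfect matching of $P_i\setminus\{u_i,v_{ij_i}\}$, contributing $n-1$ edge-cliques per $i$. Summing gives $1+1+k(n-1)=\ell$.

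For the reverse direction, suppose $\tilde{G}$ has a clique cover $\mathcal{C}$ of size $\ell$. Since each of the $k(n-1)$ $w$-vertices lies in exactly one member of $\mathcal{C}$ and every such member has size at most $2$, exactly $k(n-1)$ members of $\mathcal{C}$ contain a $w$-vertex and exactly two cliques $C_1,C_2\in\mathcal{C}$ are $w$-free. Next I would classify $w$-free cliques of $\tilde{G}$: each is either a subset of $\{u_1,\dots,u_{k+2}\}$, an edge $\{u_i,v_{i1}\}$ for some $i\in[k]$, or a clique lying entirely in $V(G)$; this is because $u_i$ has $v_{i1}$ as its only $v$-neighbor, the vertices $u_{k+1},u_{k+2}$ have no $v$-neighbors, and no set $\{u_i,u_{i'},v_{i1}\}$ with $i\ne i'$ is a clique. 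A short case analysis then shows that, for $k\ge 2$, the only way $C_1\cup C_2$ covers all $k+2$ $u$-vertices is $C_1=\{u_1,\dots,u_{k+2}\}$ and $C_2\subseteq V(G)$: the other candidate configurations (two $u$-cliques, or one $u$-clique plus an edge $\{u_i,v_{i1}\}$, etc.) force the $w$-edges to cover more than $k(n-1)$ distinct $v$-vertices, which is impossible. Finally, letting $m$ be the number of $v$-vertices covered by $w$-edges, we have $m\le k(n-1)$, so $|C_2|=kn-m\ge k$; since $G$ is $k$-partite with independent parts, every clique of $G$ has size at most $k$, so $|C_2|=k$ and $C_2$ takes exactly one vertex from each $V_i$, giving the desired $k$-colored clique.

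The main obstacle is the case analysis for the two $w$-free cliques: one must rule out configurations such as $C_2=\{u_i,v_{i1}\}$ with $C_1$ a size-$(k+1)$ sub-clique of the $u$-vertices, which requires tracking that it would force $kn-1$ $v$-vertices to be covered by $w$-edges but only $k(n-1)$ $w$-edges are available, a contradiction for $k\ge 2$. The degenerate case $k=1$ can be handled separately and is trivial, since $k$-\textsc{Colored Clique} is always a yes-instance in that case and the forward direction already constructs a cover of size $\ell$.
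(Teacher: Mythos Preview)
Your proposal is correct and follows the same overall strategy as the paper: in both directions the argument hinges on the fact that every clique containing some $w_{ij}$ has size at most two, so exactly $k(n-1)$ cliques are spent on the $w$-vertices and only two $w$-free cliques remain, one of which must be the full $u$-clique and the other a $k$-clique inside $V(G)$.

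The only noteworthy difference is in how the reverse direction pins down the two $w$-free cliques. The paper singles out $u_{k+2}$: since $u_{k+2}$ is adjacent only to other $u_i$'s, the clique containing it lies entirely inside $\{u_1,\dots,u_{k+2}\}$, and then a one-line ``without loss of generality'' modification (move every $u_i$ into that clique; this never increases the cover size) makes that clique equal to the full $u$-clique, uniformly for all $k\ge 1$. You instead enumerate the possible shapes of $w$-free cliques and rule out the bad configurations by a counting argument on how many $v$-vertices the $w$-edges can absorb; this is equally valid but slightly heavier and forces you to treat $k=1$ separately. The paper's $u_{k+2}$ trick buys a shorter, case-free argument, while your route has the minor advantage of making every exclusion explicit rather than hiding it behind a WLOG.
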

\begin{claimproof}
Suppose that $G$ has a clique of size $k$.
Let $(v_{1j_1}, \dots, v_{nj_n})\in V_1\times \dots \times V_k$ be the $k$-tuple of vertices participating in this clique.

For each $i\in [k]$, by \Cref{eq:path-split} we can split each path $P_i$ into 
    \[\pair{u_i}\diamond A_{ij_i}\diamond \pair{v_{ij_1}}\diamond B_{ij_1}\]
its first node, two paths $A_{ij_i}$ and $B_{ij_1}$ on even numbers of vertices with $2n-2$ nodes total, and the node $v_{ij_1}$.
For each $i\in [k]$, we can cover the nodes in $A_{ij_i}$ and $B_{ij_i}$ using the endpoints of the edges  from a matching of size $(n-1)$.
This shows that with $k(n-1)$ cliques, we can cover all vertices in $\tilde{G}$ in a disjoint fashion, except the $u_i$ and $v_{ij_i}$ nodes.
We include all the $u_i$ nodes in one clique and all the $v_{ij_i}$ nodes in another clique to then obtain a clique cover of size $\ell = k(n-1) + 2$ of $\tilde{G}$ as claimed.

Conversely, suppose $\tilde{G}$ has a clique cover of size $\ell$.
Note that the $w_{ij}$ vertices, with indices ranging over $i\in [k]$ and $j\in [n-1]$, form an independent set.
Consequently, covering the $w_{ij}$ requires using $k(n-1)$ distinct cliques.
Now, each $w_{ij}$ has degree two in $\tilde{G}$, and the neighbors of $w_{ij}$ are non-adjacent in $\tilde{G}$ by definition.
Consequently, the cliques used to cover the $w_{ij}$ collectively cover at most $2k(n-1)$ vertices in $\tilde{G}$.

In the assumed clique cover of size $\ell$ from $\tilde{G}$,
this implies that a set of $\ell - k(n-1) = 2$ distinct cliques are used to cover the at least $N - 2k(n-1) = 2k+2$ remaining nodes in $\tilde{G}$ not covered by the cliques that contain the $w_{ij}$ vertices.
Since node $u_{k+2}$ is adjacent to only other $u_i$ nodes, the clique that covers $u_{k+2}$ must solely consist of $u_i$ nodes.
Without loss of generality, we may then assume the clique covering $u_{k+2}$ contains all $u_i$ nodes.
After removing the nodes covered by this clique, we have one clique left, that must be used to cover the at least $(2k+2) - (k+2) = k$ remaining nodes in $\tilde{G}$.
Moreover, this clique uses no $u_i$ nodes or $w_{ij}$ nodes.
Consequently, the last clique must be a clique of size $k$ in the $v_{ij}$ nodes, which pulls back to the a clique of size $k$ in $G$.
This proves the claim.
\end{claimproof}

Observe that $N/2 - \ell = (k+1)$,
so that $\ell = N/2 - (k+1)$.

By \Cref{claim:reduction}, solving  $k$-\textsc{Colored Clique} problem on graphs with $n$ vertices reduces to solving $(N/2 - (k+1))$-\CliqueCover{} on graphs with $N = 2kn+2$ nodes and a perfect matching.
Hence the reduction is efficient enough to imply that $(n/2-k)$-\CliqueCover{} is \textsf{W[1]}-hard in graphs with perfect matchings, which proves the desired result.
\end{proof}
 
\section{Conclusion}
\label{sec:conclusion}

In this paper, we presented a faster parameterized algorithm for \DualColor{},
improving the runtime from $O^*(4^k)$ to $O^*(2^{3k/2}) \le O^*(2.83^k)$. 
We also introduced a new below-guarantee parameterization for graph coloring,
which can be viewed as a harder version of \DualColor{}.
We showed that this problem can be solved in $2^{O(k)}\poly(n)$ time as well,
and noted that a closely related parameterization is in contrast \textsf{W[1]}-hard.

The most relevant open question to this work is: what is the true parameterized complexity of \DualColor{}?
The current best algorithms for $k$-\Color{} take $O^*(2^n)$ time for general $k$, and without improving this runtime we cannot hope to solve \DualColor{} in faster than $O^*(2^k)$ time. Can we indeed achieve a $O^*(2^k)$ runtime for \DualColor{}? 

A natural strategy to obtain faster algorithms for \DualColor{} would be to try and strengthen \Cref{lemma:main} further, by obtaining efficient coloring algorithms parameterized by $H$-free modulators for larger pattern graphs $H$. 
For example, obtaining an analogue of \Cref{lemma:main} for $\bar{K_4}$-free modulators would immediately imply a faster \DualColor{} algorithm by following the framework outlined in \Cref{subsec:technical-overview}.
Unfortunately, this simple attempt at generalizing \Cref{lemma:main} does not seem possible.
This is because solving $k$-\Color{} in graphs without induced copies of $\bar{K}_4$ is $\NP$-hard{} in general \cite[Theorem 1]{DBLP:conf/wg/KralKTW01},
so that assuming $\textsf{P}\neq\textsf{NP}$ we cannot hope to solve $k$-\Color{} in graphs with $\bar{K}_4$-free modulators of size $p$ in $f(p)\poly(n)$ time, for any function $f(\cdot)$.
As mentioned in the \Cref{sec:intro}, $(n-k)$-\SetCover{} can be solved in $O^*(2^{3k/2})$ time using a different method.
Can this approach be combined with our triangle packing argument to obtain a faster \DualColor{} algorithm? 

Another interesting research direction is to obtain faster algorithms for \emph{multiplicative} below-guarantee graph coloring.
For any parameter $\delta \in [1/3,1]$, by setting $k = (1-\delta)n$ in \Cref{thm:dualcolor} we get that $(\delta n)$-\Color{} can be solved in $O^*(2^{(1 - (\delta - 1/3)n})$ time.  
Using very different techniques,
previous work showed that for $\delta\in [0,1]$ it is possible to 
distinguish 
between the cases where the input graph $G$ admits a proper coloring with $(\delta n - 1)$ colors and the case where any proper coloring of $G$ needs at least $(\delta n + 1)$ colors 
in $O^*(2^{(1-\Omega(\delta^4))n})$ time \cite[Theorem 1.2]{Nederlof2016}. 
Can the dependence on $\delta$ be improved and the need for additive approximation here be removed,  to solve $(\delta n)$-\Color{} in $O^*(2^{(1-\Omega(\delta))n})$ time in general? 

Finally, we showed that for \DualColor{} and its harder variant, constructing a maximal triangle packing instead of a maximal matching can accelerate algorithms for these problems.
Can this method be used more generally to help improve parameterized algorithms or kernelization for other problems where the current best methods rely on matching-based techniques such as crown decomposition?

\bibliographystyle{plain}
\bibliography{main}

\end{document}